\newtheoremstyle{custom}
  {3pt}
  {3pt}
  {\slshape}
  {}
  {\bfseries}
  {.}
  { }
   {}
\theoremstyle{custom}
\newtheorem{theorem}{Theorem}[section]
\newtheorem{proposition}[theorem]{Proposition}
\newtheorem{proposition/definition}[theorem]{Proposition/Definition}
\newtheorem{lemma}[theorem]{Lemma}
\theoremstyle{definition}
\newtheorem{definition}[theorem]{Definition}
\theoremstyle{remark}
\newtheorem{remark}[theorem]{Remark}
\newtheoremstyle{exercise}
  {3pt}
  {6pt}
  {}
  {}
  {\bfseries}
  {:}
  { }
   {}
\theoremstyle{exercise}
\newtheorem{exercise}[theorem]{Exercise}
\newtheoremstyle{exercises}
  {3pt}
  {6pt}
  {}
  {}
  {\bfseries}
  {:}
  {\newline}
   {}
\theoremstyle{exercise}
\newtheorem{exercises}[theorem]{Exercises}
\def\boxit#1{\vbox{\hrule height1pt\hbox{\vrule width1pt\kern3pt
  \vbox{\kern3pt#1\kern3pt}\kern3pt\vrule width1pt}\hrule height1pt}}
\def\trank{\text{rank}}
\def\BC{\mathbb C}
\def\tdim{{\rm dim}}
\def\hd{,...,}
\def\ww{\wedge}
\def\inv{{}^{-1}}
\def\cB{{\mathcal B}}\def\cA{{\mathcal A}}
\def\CC{\mathbb C}
\def\11{\mathbf 1}
\def\ot{{\mathord{ \otimes } }}
\def\ra{{\mathord{\;\rightarrow\;}}}
\def\dim{{\rm dim}\;}
\def\La#1{\Lambda^{#1}}
\def\BZ{\Bbb Z}
\def\ep{\epsilon}
\def\BC{\mathbb  C}
\def\ep{\epsilon}
\def\hd{, \hdots ,}
\def\inv{{}^{-1}}
\def\La#1{\Lambda^{#1}}
\def\ur{\underline{\mathbf{R}}}
\def\uR{\underline{\mathbf{R}}}
\def\ra{\rightarrow}
\def\tdim{\operatorname{dim}}
\def\tmax{\operatorname{max}}
\def\trank{\operatorname{rank}}
\def\ww{\wedge}
\def\bbb{{\mathbf{b}}}
\def\be{\begin{equation}}
\def\ene{\end{equation}}
\def\aaa{{\mathbf{a}}}
\def\bbb{{\mathbf{b}}}
\def\ccc{{\mathbf{c}}}
\DeclareMathOperator{\tlog}{log}
\def\trank{\mathbf{R}}
\def\vp{\mathbf{V}\mathbf{P}}
\def\vnp{\mathbf{V}\mathbf{N}\mathbf{P}}
\newcommand{\Id}{\operatorname{Id}}
\def\Mn{M_{\langle \nnn \rangle}}
\def\Mn{M_{\langle \nnn \rangle}}
\def\trank{{\mathrm {rank}}}
\def\aaa{\mathbf{a}}
\def\bbb{\mathbf{b}}
\def\ccc{\mathbf{c}}
\def\nnn{\mathbf{n}}
\def\Mn{M_{\langle \nnn \rangle}}
\def\trank{{\mathrm {rank}}}
\def\aaa{{\bold a}} \def\ccc{{\bold c}}
\def\nnn{\bold n}
\def\tsupp{{\rm supp}}
\keywords{Matrix multiplication complexity, Tensor rank, Asymptotic rank, Laser method}
\subjclass[2010]{68Q17; 14L30, 15A69}
\renewcommand{\BC}{\mathbb{C}}
\newcommand{\bfR}{\mathbf{R}}
\newcommand{\frakgl}{\mathfrak{gl}}
\newcommand{\textsum}{{\textstyle\sum}}
\begin{document}

\author[J.M. Landsberg and Mateusz Michalek]{J.M. Landsberg and Mateusz Micha{l}ek}

\address{Department of Mathematics, Texas A\&M University, College Station, TX 77843-3368, USA}
\address{Max Planck Institute for Mathematics in the Sciences, Leipzig, 04103, Germany}
 \email[J.M. Landsberg]{jml@math.tamu.edu}
\email[M. Micha{l}ek]{Mateusz.Michalek@mis.mpg.de}

\title[Tensor of high border rank]{Towards finding hay in a haystack: explicit tensors of border rank greater than $2.02m$
in $\BC^m\ot \BC^m\ot \BC^m$}

\thanks{Landsberg   supported by NSF grant  AF-1814254.}

\keywords{explicit tensor, tensor rank,     border rank}

\subjclass[2010]{15A69; 14L35, 68Q15}

\begin{abstract} 
We write down an explicit sequence of tensors in $\BC^m\ot \BC^m\ot\BC^m$, for all $m$ sufficiently large, having border rank
  at least $2.02m$, overcoming a longstanding barrier. We obtain our lower bounds
via the border substitution method.
 \end{abstract}

\maketitle

\section{Introduction}
A frequently occurring theme in algebraic complexity theory is the {\it hay in a haystack} problem (phrase due
to H. Karloff): find an explicit object that behaves generically. For example Valiant's $\vp$ vs.~$\vnp$ problem
is to find an explicit polynomial sequence that is difficult to compute.
We address this problem for three-way tensors. Here the state of the art is embarrassing.

Let $A,B,C$ be complex vector spaces. A tensor $T\in  A \ot B\ot C$ has {\it rank one} if $T=a\ot b\ot c$ for some 
$a\in A$, $b\in B$, $c\in C$. The {\it rank} of $T$, denoted $\bfR(T)$, is the 
smallest $r$ such that $T$ is a sum of $r$ rank one tensors.
 The {\it border 
rank} of $T$, denoted $\ur(T)$, is the smallest $r$ such that $T$ is the limit of a 
sequence of rank $r$ tensors.

If a tensor  $T\in \BC^m\ot \BC^m\ot \BC^m$ is chosen randomly, with $m>3$, then with probability
one, it will have  rank and border rank $\lceil \frac{m^3}{3m-2}\rceil\sim \frac{m^2}3$ \cite{MR87f:15017}.  More precisely, the
set  of tensors with    border rank less than $\lceil \frac{m^3}{3m-2}\rceil$ is a proper subvariety and the
set of tensors with rank  not equal to  $\lceil \frac{m^3}{3m-2}\rceil$ is contained in a proper subvariety. 
Previous to this paper, there was no explicit sequence of tensors of
 border rank at least $2m$ known (and no explicit
sequence of
rank at least $3m$ known), although several known sequences come close to these.
In \cite{MR3025382} they exhibit an explicit sequence of tensors satisfying $\bold R(T_m)\geq 3m-O(\tlog_2(m))$
and in \cite{MR3320240} a sequence of tensors of border rank at least $2m-2$ is presented, although
these tensors are only \lq\lq mathematician explicit\rq\rq\ and not \lq\lq computer scientist explicit\rq\rq\ 
as some entries of the tensor are of the form $2^{2^m}$. We explain  the notion of explicit in \S\ref{subs:whatEX}.
The best border rank lower bound for 
a computer science explicit tensor to our knowledge is ironically the tensor $\Mn$ corresponding 
to matrix multiplication of $\nnn\times \nnn$
matrices where, setting $m=\nnn^2$,  $\ur(\Mn)\geq 2m-\lceil\frac 12 \tlog_2(m)\rceil-1$ \cite{MR3842382}. In this paper
we deal exclusively with border rank. We show:

\begin{theorem} \label{mainthm}
For each  $k$,    define the $\lceil \frac{3(2k+1)^2}{4}\rceil$-dimensional
family of tensors $T_k=T_k(p_{ij})\in \BC^{2k+1}\ot\BC^{2k+1}\ot\BC^{2k+1}=: A\ot B\ot C$, where $|i|,|j|, |i+j|\leq k$ as follows:
\be\label{Tk}
T_k=\sum_{i=-k}^k \sum_{j=\max(-k,-i-k)}^{\min(k,-i+k)}p_{i j} a_i\ot b_j\ot c_{-i-j},
\ene
where $A\simeq B\simeq C\simeq \BC^{2k+1}$ and $(a_{-k},\dots, a_k),(b_{-k},\dots, b_k),(c_{-k},\dots, c_k)$ 
are respectively the bases of the vector spaces
$A,B,C$.  

Then for all $0<\ep<\frac 1{42}$, and all  $k$ at least
$$
\frac{ 37044\ep^3 - 82908\ep^2 - 983829\ep + 364175}{(1-42\ep)^3},
$$
there exists an explicit assignment of the $p_{ij}$ 
in the sense of Definition \ref{explicitdef}, so that
$\ur(T_k)\geq (2+\ep)m$
In particular, when 
$\ep=.001$ this holds once   $k\geq 413085$, when 
$\ep=.01$,  this holds once   $k\geq 1.82\times 10^6$,  
 and when $\ep=.02$  this holds once  
$k\geq 8.41\times 10^7$.
\end{theorem}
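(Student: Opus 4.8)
The plan is to prove $\ur(T_k)\ge(2+\ep)m$ by the border substitution method (in the form recalled above), and then to derandomize. Write $m=2k+1$ and set $q=\lceil\ep m\rceil$ (possibly corrected by a bounded additive constant). The border substitution method lets one peel $q$ ``hyperplanes'' off each of $A$, $B$ and $C$ in turn, at a cost of $1$ per hyperplane, provided the tensor surviving each peeling remains concise; combined with the fact that a concise tensor in $\BC^{m-q}\ot\BC^{m-q}\ot\BC^{m-q}$ has border rank at least $m-q$, this gives $\ur(T_k)\ge 3q+(m-q)=2m+q\ge(2+\ep)m$. So everything reduces to the following statement about $T_k$: for our assignment of the $p_{ij}$, and for \emph{every} triple of subspaces $A'\subseteq A$, $B'\subseteq B$, $C'\subseteq C$ of codimension $q$, the restriction $T_k|_{A'\ot B'\ot C'}$ is concise in all three factors.

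The first part of the work is combinatorial. As a map $A^{*}\to B\ot C$, the tensor $T_k$ has image spanned by its $A$-slices $M_i=\sum_j p_{ij}\,b_j\ot c_{-i-j}$, each supported on the anti-diagonal $\{(j,-i-j)\}$ inside the hexagon $\{|i|,|j|,|i+j|\le k\}$; these anti-diagonals are pairwise disjoint, which converts all the conciseness questions into rank and incidence statements. For coordinate triples $A'=\langle a_i : i\notin I_0\rangle$ and so on, conciseness becomes a Cauchy--Davenport-type estimate: every anti-diagonal contains at least $\frac{m+1}{2}$ points (the two corners $i=\pm k$ being tightest), and covering such an anti-diagonal by $q$ deleted rows and $q$ deleted columns is impossible once $2q<\frac{m+1}{2}$, so no surviving slice is annihilated when $q<\frac{m+1}{4}$. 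For general (non-coordinate) triples one must instead bound
$$\dim\Big(\langle M_i\rangle_i\cap(K_B\ot C+B\ot K_C)\Big)$$
for generic $q$-dimensional $K_B\subseteq B$, $K_C\subseteq C$, together with the two analogous statements --- that is, control how the structured $m$-dimensional slice space of $T_k$ meets a generic ``cleaving'' subspace. This is where the hexagon geometry is used most delicately and where the constant $\frac1{42}$ arises: the estimate only guarantees conciseness up to codimension about $\frac1{42}m$, and matching $q=\lceil\ep m\rceil$ against this threshold while keeping track of floors, lower-order terms and the three coupled conditions is exactly what produces the stated cubic-in-$\ep$ lower bound on $k$.

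The second part is the passage from ``generic $p_{ij}$ suffice'' to an explicit choice. Conciseness of a restriction is the non-vanishing of a suitable determinant whose entries are products of the $p_{ij}$, hence a Zariski-open condition, and it holds for generic $p_{ij}$ precisely because the hexagon support is combinatorially ``balanced''. By elimination, the requirement that \emph{all} codimension-$q$ restrictions be concise is a finite conjunction of polynomial inequations in the $p_{ij}$ whose number and degree are bounded explicitly in $k$; one then produces explicit $p_{ij}$ avoiding every bad hypersurface --- for instance by evaluating on a sufficiently fine grid of small integers, or by an explicit closed-form assignment --- and verifies that the resulting sequence meets the ``computer-scientist explicit'' standard of Definition~\ref{explicitdef}, in particular that the bit-sizes of the $p_{ij}$ stay polynomial, unlike the doubly-exponential entries of earlier examples.

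The main obstacle is the general-subspace conciseness estimate of the second paragraph. In the coordinate case this is essentially a sumset inequality and gives the generous threshold $\frac14$; for arbitrary subspaces one needs a genuine algebro-geometric argument about the intersection of the structured slice space with a generic cleaving subspace, and --- crucially --- one must extract a \emph{strictly positive} surplus $\ep m$ from it rather than merely recovering the classical barrier $2m$. Getting this surplus to be a fixed positive constant ($\frac1{42}$), and doing so while keeping the derandomization of low bit-complexity, is the hard part; by comparison the border substitution bookkeeping, the disjoint-support reduction, and the genericity of the determinantal conditions are routine.
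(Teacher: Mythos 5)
Your proposal has a fatal arithmetic error in its central inequality, and repairing it would require a fundamentally different mechanism. Peeling $q$ hyperplanes from each of $A$, $B$, $C$ (cost $3q$) and then using conciseness of the surviving tensor in $\BC^{m-q}\ot\BC^{m-q}\ot\BC^{m-q}$ (giving $m-q$) yields
$$\ur(T_k)\ \geq\ 3q+(m-q)\ =\ m+2q,$$
not $2m+q$ as you wrote. With your choice $q=\lceil\ep m\rceil$ this gives only $(1+2\ep)m$, which does not even reach the long-known $2m$ barrier, let alone $(2+\ep)m$. This is not a reparable slip: border substitution plus conciseness can never certify more than $m+2q$, and to make $m+2q\geq(2+\ep)m$ you would need $q>\tfrac{1+\ep}{2}m$, i.e.\ you would have to peel more than half the indices from each factor and still have a concise residual tensor --- a very different regime from the one you set up, and one where the conciseness analysis you sketch (coordinate case via a sumset estimate, general case via an intersection-with-cleaving-subspace argument) would need to be reworked from scratch. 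Your guess that the constant $\tfrac1{42}$ arises from a Cauchy--Davenport-type conciseness threshold is also not how it enters.

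The paper's proof uses a genuinely different combination. It peels from the $A$-factor only, and peels a large fraction (roughly $q\approx\tfrac57 m$ indices, far more than $\ep m$). It then lower-bounds the border rank of the restricted tensor $T_{k,q}$ not by conciseness (which would only give linear-in-dimension) but by the $p=2$ Koszul flattening, which can certify border rank up to $\tfrac53$ times the local dimension. That $\tfrac53$ multiplier is the essential new ingredient: it is what lets the total exceed $2m$. The combinatorial case analysis (Lemmas~\ref{claim:no5large}, \ref{lem:far}, \ref{lem:near}) is about which configurations of surviving indices allow one to extract a five-dimensional $A$-slice on which the Koszul flattening has full rank (via Lemmas~\ref{neardiagcase} and \ref{primecase}), and the constant $\tfrac1{42}$ comes from the quantity $4+\tfrac1{21}$ appearing in Lemma~\ref{lem:near}, not from a conciseness threshold. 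Your derandomization outline (conjunction of explicit nonzero polynomial conditions, avoided by a low-bit-complexity assignment) is in the right spirit and matches the paper's Lemmas~\ref{expla} and \ref{explb}, but it is attached to a lower-bound mechanism that cannot reach the stated target.
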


The explicit assignment is as follows: for $i>\frac 37k$, list the $p_{ij}$ in some order and assign them  
distinct  prime numbers as in Lemma \ref{expla}. For $i\leq \frac 37k$, we assign them periodically as in Lemma \ref{explb}. Precisely, for a fixed constant $K$ introduced in \eqref{Keqn}, the value of $p_{ij}$ depends only on $i \text{ mod } 4K$ and $j \text{ mod } 4K$. For $0\leq i,j<4K$ the values of the $p_{ij}$'s are determined in such a way that no subset of distinct $p_{ij}$'s is a solution to any of the explicit nonzero polynomials that we derive in Lemma \ref{neardiagcase}.

If $k$ is smaller, we can still get border rank bounds of twice the dimension or greater. For example:

\begin{theorem}\label{thm:explicit}
Let $T_6$ be assigned the values $p_{ij}=2^i+3^j-7ij$.
Then $\ur(T_6)\geq 26=2m$.
\end{theorem}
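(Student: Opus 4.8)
The plan is to run the border substitution argument behind Theorem~\ref{mainthm} at $k=6$ (so $m=13$), where the weaker target $\ur(T_6)\ge 2m=26$ should come from a single substitution step followed by one Koszul flattening. First I would record that $T_6$ is concise in all three factors and is $1$-generic. For each $i$ the $A$-slice $T_6(a_i^*)=\sum_j p_{ij}\,b_j\ot c_{-i-j}$ is supported on coordinate tensors of $B\ot C$ disjoint from the supports of the other slices and contains the nonzero entry $p_{i,0}=2^i+1$, so the thirteen $A$-slices are independent; the same bookkeeping --- using $p_{0,j}=1+3^j\neq 0$ --- gives conciseness in $B$ and $C$, and a generic combination $\sum_i\lambda_i T_6(a_i^*)$ is a banded $13\times13$ matrix, nonzero exactly on the antidiagonals $|j+l|\le 6$, whose zero pattern contains no all-zero $p\times q$ submatrix with $p+q\ge 14$ and so is invertible. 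Hence $T_6$ satisfies the hypotheses of the border substitution machinery.

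Now suppose $\ur(T_6)\le 25$. By the border substitution method there is a codimension-two subspace $A'\subset A^*$ such that the restriction $T_6|_{A'}\in\BC^{11}\ot B\ot C$ has $\ur(T_6|_{A'})\le 23$; more generally every codimension-two restriction satisfies $\ur(T_6|_{A'})\le\ur(T_6)-2$. On the other hand $11=2\cdot5+1$ is odd, so the degree-$5$ Koszul flattening of the $\BC^{11}$-factor of such a restriction is a square map of size $\binom{11}{5}\cdot 13=6006$, and whenever it has full rank one gets $\ur(T_6|_{A'})\ge\lceil 6006/\binom{10}{5}\rceil=\lceil 6006/252\rceil=24$. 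So to reach a contradiction it suffices that \emph{every} codimension-two restriction of the $A$-factor of $T_6$ has full degree-$5$ Koszul flattening; eliminating the subspace parameter, this is the assertion that the $127$ values $\{p_{ij}\}$ avoid a finite list of explicit nonzero polynomials --- the $k=6$ specialization of the polynomials of Lemma~\ref{neardiagcase} together with the Koszul-flattening nondegeneracy conditions --- a list that is nonempty only because generic $T_k$ already has $\ur(T_k)\ge 2m$, so that the relevant Koszul flattenings are generically nondegenerate. (Note $p_{5,1}=2^5+3-35=0$, so these conditions are not simply ``all $p_{ij}\neq 0$''; an isolated vanishing entry does no harm since only nondegeneracy after a generic twist is used.)

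The last step is the verification that $(i,j)\mapsto 2^i+3^j-7ij$ lies off each of these hypersurfaces over the hexagon $|i|,|j|,|i+j|\le 6$, which is a finite computation: finitely many polynomials of bounded degree in the $127$ explicit rational values. For the determinantal conditions --- non-vanishing of various minors of matrices with entries among the $p_{ij}$, and of the maximal minors governing the Koszul-flattening rank --- one can check this directly or reduce it to exhibiting a single nonzero minor; heuristically it succeeds because the three constituents $2^i$, $3^j$ and $7ij$ occupy different scales over this range, making the values ``generic enough'' to satisfy no low-degree relation among any subset. I expect the main obstacle to be exactly this: making the border substitution reduction at $k=6$ explicit enough to write down the finite list of bad hypersurfaces, and then carrying out the arithmetic check --- here one genuinely must verify the conditions, since the values are rigidly prescribed and there is no freedom, as in Theorem~\ref{mainthm}, to choose the $p_{ij}$ periodically with a large period so as to dodge the bad loci. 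The geometric inputs --- conciseness and $1$-genericity of $T_6$, the border substitution inequality, and the Koszul-flattening rank bound for the restricted tensor --- are routine by comparison.
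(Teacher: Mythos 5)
Your plan is recognizably the right \emph{shape}---border substitution to kill part of the $A$-factor, then a Koszul flattening on the residual tensor---but you make a different numerical trade-off than the paper does, and you do not carry out the step that actually bears the weight of the proof.

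The paper uses $q=4$: by Proposition~\ref{tightbapolarok} plus tightness there is a \emph{coordinate} codimension-four $A'\subset A^*$ with $\ur(T_6|_{A'})\le \ur(T_6)-4$. It then projects $T_6|_{A'}$ to a generic $\CC^5$ and applies the $p=2$ Koszul flattening, obtaining a $130\times 130$ matrix whose rank it verifies (for each of the $\binom{13}{4}$ coordinate choices, by the Macaulay2 code in \S\ref{codesect}) to be $\ge 127$, giving $\ur\ge\lceil 127/6\rceil +4=22+4=26$. You instead take $q=2$ and compensate with a $p=5$ Koszul flattening on the entire $\CC^{11}$, which requires rank $\ge 23\cdot\binom{10}{5}+1=5797$ of a $6006\times 6006$ matrix, for each of the $\binom{13}{2}=78$ coordinate choices. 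The arithmetic $24+2=26$ does close, so this is a genuinely different, if computationally much heavier, route.

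The gap is that you never verify the rank condition, and at this ``barely-enough'' margin the verification is the whole content of the theorem. You need $5797$ out of a possible $6006$, i.e.\ essentially the theoretical ceiling $\lceil 11\cdot 13/6\rceil=24$ for the $p=5$ flattening, and for the very structured banded tensors $T_6|_{A'}$ there is no a priori reason this is attained---indeed for thinner supports (e.g.\ a pure diagonal) it manifestly is not. ``Generic'' reasoning does not apply because, as you yourself observe, the $p_{ij}=2^i+3^j-7ij$ are rigidly prescribed; the only honest way to close the argument is to compute the ranks. You acknowledge this as ``the main obstacle'' and leave it open, so what you have is a plan, not a proof. Two smaller points: (i) the appeal to ``the $k=6$ specialization of the polynomials of Lemma~\ref{neardiagcase}'' is misplaced---those are $p=2$ non-vanishing conditions for a particular five--semi-diagonal support used in the asymptotic argument, not the conditions governing the $p=5$ flattening of $T_6|_{A'}$; and (ii) the sentence ``more generally every codimension-two restriction satisfies $\ur(T_6|_{A'})\le\ur(T_6)-2$'' is false (the border substitution inequality holds for one filtration, not universally), though this does not propagate since your subsequent logic correctly reduces to bounding \emph{all} coordinate codim-2 restrictions from below.
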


\subsection*{Acknowledgements}
We thank M. Bl\"aser,  M. Forbes and J. Grochow for explaining various definitions of
explicitness to us.

\section{Background} We explain why the family $T_k$ is promising for proving border rank lower bounds.

For a tensor $T=\sum_{ijk}t_{ijk} a_i\ot b_j\ot c_k$ given in bases, the {\it support} of $T$
is $\tsupp(T):=\{ (ijk)\mid t_{ijk}\neq 0\}$.
Let $\aaa=\tdim A$, $\bbb=\tdim B$, $\ccc=\tdim C$. 

\begin{definition}\cite{MR1089800}
A tensor $T\in A\ot B\ot C$ is {\it tight}
if its symmetry Lie algebra contains a regular semisimple element of $\frakgl(A) \oplus \frakgl(B) \oplus \frakgl(C)$,
under its natural action on $A \otimes B \otimes C$. 
In other words, $T$ is tight if
there exist bases of $A,B,C$  and  
injective functions $\tau_A : [\aaa] \to \BZ$, $\tau_B : [\bbb] \to \BZ$ and $\tau_C : [\ccc] \to \BZ$ such that
for all $(i,j,k) \in \tsupp(T)$, $\tau_A(i) + \tau_B(j) + \tau_C(k)=0$. (See \cite{CGLVW} for details.)
 Call such a basis a {\it tight basis}.
\end{definition}

The border substitution method and normal form lemma of \cite{MR3633766} imply:

\begin{proposition}\label{tightbapolarok}
Let $T\in A\ot B\ot C$. Then there is a filtration $A_1^*\subset \cdots \subset A_{\aaa-1}^*\subset A^*$
of $A^*$, such that letting $T_q:=T|_{A_{\aaa-q}^*\ot B^*\ot C^*}$,
$$
\ur(T)\geq \tmax_{q\in \{1\hd \aaa-1\}}(q+\ur(T_{q})).
$$
If moreover $T$ is tight, then there exists such a filtration where each filtrand is spanned by tight basis elements.
\end{proposition}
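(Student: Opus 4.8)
The plan is to derive this from two results of \cite{MR3633766}. The first is the \emph{border substitution method}: for any $T'\in A'\ot B\ot C$ with $\dim A'\ge 2$ there is a line $\ell\subseteq A'$ so that, writing $\overline{T'}\in (A'/\ell)\ot B\ot C$ for the image of $T'$ under the projection killing $\ell$, one has $\ur(T')\ge 1+\ur(\overline{T'})$. The second is the \emph{normal form lemma}, which refines this to say that if $T'$ is tight then $\ell$ may be chosen to be spanned by a tight basis vector of $A'$. The reason behind the refinement, which I would also recall, is that the set $Z:=\{[\ell]\in\mathbb P(A'):\ur(\overline{T'})\le\ur(T')-1\}$ is nonempty by the border substitution method, Zariski closed by upper semicontinuity of border rank, and stable under the torus $\mathbb T\subseteq\GL(A')\times\GL(B)\times\GL(C)$ generated by a regular semisimple element of the symmetry Lie algebra of $T'$ — because such a $g\in\mathbb T$ fixes $T'$ and therefore carries $\overline{T'}$ for $\ell$ to the corresponding tensor for $g\ell$ by a linear isomorphism, and border rank is $\GL$-invariant. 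A nonempty $\mathbb T$-stable Zariski closed subset of a projective space has a $\mathbb T$-fixed point (Borel's fixed point theorem), and since the chosen symmetry has pairwise distinct eigenvalues on $A'$ the $\mathbb T$-fixed points of $\mathbb P(A')$ are exactly the tight basis lines.

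Granting these, the first step is to iterate. Put $A^{(0)}:=A$, $T^{(0)}:=T$, and having built $T^{(q-1)}\in A^{(q-1)}\ot B\ot C$ with $\dim A^{(q-1)}=\aaa-(q-1)\ge 2$, apply the border substitution method to obtain a line $\ell^{(q)}\subseteq A^{(q-1)}$, set $A^{(q)}:=A^{(q-1)}/\ell^{(q)}$ and let $T^{(q)}$ be the image of $T^{(q-1)}$. Composing the resulting inequalities $\ur(T^{(q'-1)})\ge 1+\ur(T^{(q')})$ gives $\ur(T)\ge q+\ur(T^{(q)})$ for $q=1,\dots,\aaa-1$. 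Each $A^{(q)}$ is a quotient of $A$, so $A_{\aaa-q}^*:=(A^{(q)})^*$ is an $(\aaa-q)$-dimensional subspace of $A^*$; the surjections $A^{(q-1)}\twoheadrightarrow A^{(q)}$ dualize to inclusions $A_{\aaa-q}^*\subseteq A_{\aaa-q+1}^*$, producing the flag $A_1^*\subset\cdots\subset A_{\aaa-1}^*\subset A^*$, and by construction $T^{(q)}=T|_{A_{\aaa-q}^*\ot B^*\ot C^*}=T_q$. Taking the maximum over $q$ gives the inequality.

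For the last statement I would run the same iteration, but using the normal form lemma at each step to take $\ell^{(q)}$ spanned by a tight basis vector of $A^{(q-1)}$. It then only remains to see that tightness is inherited by the successive quotients: if $\xi=(\xi_A,\xi_B,\xi_C)$ is a regular semisimple element of the symmetry Lie algebra of $T^{(q-1)}$ and $\ell^{(q)}$ is one of its eigenlines in $A^{(q-1)}$, then $\xi$ descends to $A^{(q)}$ with the remaining, still pairwise distinct, eigenvalues, hence is still regular semisimple in $\frakgl(A^{(q)})\oplus\frakgl(B)\oplus\frakgl(C)$, and the symmetry condition descends to $T^{(q)}$; so $T^{(q)}$ is tight, with tight basis the images of the remaining tight basis vectors of $A^{(q-1)}$ together with the tight bases of $B$ and $C$. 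Thus every $A^{(q)}$ is the quotient of $A$ by the span of $q$ vectors from a fixed tight basis, and so each filtrand $A_{\aaa-q}^*=(A^{(q)})^*$ is spanned by the complementary $\aaa-q$ elements of the dual tight basis.

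The real content is entirely inside the two cited black boxes — the border substitution method itself rests on the normal form analysis of border rank decompositions $T'=\lim_{t\to 0}\sum_s a_s(t)\ot b_s(t)\ot c_s(t)$, where the curves $a_s(t),b_s(t),c_s(t)$ may have poles that cancel in the limit, and its tight version requires making that normal form compatible with the torus. Once those are in hand the remaining work is bookkeeping: assembling the iterated quotients into a flag on $A^*$, identifying $T^{(q)}$ with $T_q$, and checking that regular semisimplicity and tightness survive each quotient. The only genuine care needed is at the start: one should reduce to the case where $T$ is $A$-concise (so that $\ur(T)\ge\aaa$ and the iteration never reaches the zero tensor), the non-concise directions being handled separately and trivially.
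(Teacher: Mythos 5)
Your proof is correct and follows exactly the route the paper intends: the paper states Proposition~\ref{tightbapolarok} without proof, asserting only that it follows from the border substitution method and normal form lemma of \cite{MR3633766}, and your derivation — iterate the one-step border substitution to build the flag, dualize the quotients, and for the tight case use torus-invariance plus Borel's fixed point theorem to place the killed line on a tight eigenline, then check that regular semisimplicity survives the quotient — is precisely that implication. The one point worth stating more firmly than ``handled separately and trivially'' is that the inequality as written literally requires $\ur(T)\ge\aaa-1$, which fails for non-concise $T$ (e.g.\ a rank-one tensor in $\BC^3\ot B\ot C$), so conciseness is a genuine hypothesis rather than a harmless reduction; fortunately it is preserved under the quotients in your iteration, so once assumed at the start it holds throughout.
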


Thus tight tensors are a natural class of tensors where the border substitution method may be implemented.

If $T$ is tight, then any tensor with the same support as $T$ in a tight basis is also tight. In particular, it
is natural to study tight tensors in families indexed by the support. In this context, 
it was shown in \cite{CGLVW} that the largest possible  support  
for a tight tensor is given by the   family \eqref{Tk},  which  is a $\lceil \frac 34 m^2\rceil$-dimensional
family. That is, the family \eqref{Tk} is the largest such that Proposition \ref{tightbapolarok} may be applied to.

\begin{remark}
There is an absolute limit to the utility of the border substitution method combined with existing
lower bound techniques of proving border rank bounds which is  roughly  $3m-3\sqrt{3m}$ \cite{MR3842382}.
\end{remark}

\section{Explicit tensors}\label{subs:whatEX}

\begin{definition} \label{explicitdef}
 A sequence of objects (e.g., numbers, graphs, tensors), indexed by integers, where the $n$-th object has size
  (i.e.,  the minimal number of bits needed to specify it) at least $f(n)$ 
 is {\it explicit}  if the $n$-th object may be computed in time polynomial   in $f(n)$. 
\end{definition}

We remark that there are more restrictive definitions of
what it means for a sequence to be explicit but the experts we asked viewed the above one as acceptable.

Since  tensors in $\CC^m\ot\CC^m\ot\CC^m$ have size at least $m^3$, a 
 sequence of tensors is explicit if the $m$-th tensor in $\CC^m\ot\CC^m\ot\CC^m$ may be computed in time polynomial   in $m$.

\begin{lemma}\label{expla} Let $T_m\in \BC^m\ot \BC^m\ot \BC^m$ be a sequence of tensors whose only nonzero entries
are as in \eqref{Tk}. If the entries of $T_m$ are the first distinct $m^2$ prime numbers, then $T$ is explicit.
\end{lemma}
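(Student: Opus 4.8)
The plan is to invoke the standard fact that the $m$-th prime number $p_m$ satisfies $p_m = O(m \log m)$, and in particular $p_m \leq m^2$ for all sufficiently large $m$, so each prime entry of $T_m$ has bit-size $O(\log m)$ and the whole tensor has size $O(m^2 \log m)$, well within a polynomial in the ambient size $m^3$. Then the only real content is that the first $m^2$ primes can be \emph{computed} in time polynomial in $m$. First I would recall that Definition~\ref{explicitdef} only requires the $m$-th object to be computable in time polynomial in its size (equivalently, polynomial in $m$ here), so it suffices to produce the list $p_1, p_2, \dots, p_{m^2}$ and then lay them out in the positions $(i,j,-i-j)$ prescribed by \eqref{Tk}.

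The key steps, in order: (1) bound the size of the largest entry — by the explicit prime bounds of Rosser--Schoenfeld (or simply $p_n < n(\ln n + \ln\ln n)$ for $n \geq 6$), the $m^2$-th prime is at most, say, $2m^2\ln(m^2)$, hence has $O(\log m)$ bits and the tensor is specified by $O(m^2\log m)$ bits, which is at most its trivial size bound $m^3$; (2) generate the primes — run the Sieve of Eratosthenes up to $N := \lceil 2m^2 \ln(m^2)\rceil$, which by step~(1) contains at least $m^2$ primes, in time $\tilde O(N) = \tilde O(m^2)$, and read off the first $m^2$ of them; (3) assemble the tensor — enumerate the index set $\{(i,j) : |i|,|j|,|i+j| \leq k\}$ (where $m = 2k+1$) in a fixed order, and assign the $\ell$-th such pair the $\ell$-th prime as the coefficient $p_{ij}$ of $a_i\ot b_j\ot c_{-i-j}$; all remaining entries are zero. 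Each of these steps runs in time polynomial in $m$, hence polynomial in the size $f(m) \geq m^3$, so the sequence $(T_m)$ is explicit in the sense of Definition~\ref{explicitdef}.

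I do not expect a genuine obstacle here: the statement is essentially a bookkeeping exercise combining a classical effective upper bound on primes with the elementary complexity of the sieve. The only point requiring a modicum of care is matching the quantifiers in Definition~\ref{explicitdef} — one must confirm that ``computed in time polynomial in $f(n)$'' is satisfied with $f(m)$ the true size of $T_m$, and since that size is $\Theta(m^2 \log m)$ while the computation takes $\tilde O(m^2)$ steps, this is immediate. (If one instead insists on the more permissive reading where $f(m) = m^3$ is the ambient dimension count, the conclusion is only easier.) A secondary subtlety worth a sentence in the writeup is that the number of nonzero entries is $|\{(i,j): |i|,|j|,|i+j|\le k\}| = \lceil \tfrac34 m^2\rceil < m^2$, so the first $m^2$ primes certainly suffice, and using exactly $m^2$ of them (with some unused, or truncated to the number of support positions) changes nothing in the argument.
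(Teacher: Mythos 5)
Your proof is correct and follows essentially the same route as the paper's: invoke a Rosser–Schoenfeld–type prime bound to control where the first $m^2$ primes live, run the Sieve of Eratosthenes over a polynomially bounded range, and observe the whole procedure is polynomial in $m$ (and hence in the size of $T_m$). The only cosmetic difference is that you sieve up to $\tilde O(m^2)$ via an upper bound on $p_{m^2}$, while the paper sieves up to $m^3$ using the lower bound $\pi(x) > x/\log x$; both are immaterial optimizations for a polynomial-time claim.
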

\begin{proof}
Let $\pi(x)$ be the number of prime numbers smaller or equal than $x$.  By \cite{rosser1962approximate}, 
  $\pi(x)>\frac{x}{\log x}$ for $x\geq 17$. Hence, running the sieve method on first $m^3$ numbers, 
we may find the  first $m^2$ prime numbers in polynomial time. 
\end{proof}

The following is obvious but we record it for future use:

\begin{lemma}\label{explb}Let $T_m\in \BC^m\ot \BC^m\ot \BC^m$ be a sequence of tensors whose only nonzero entries
are as in \eqref{Tk}. If the entries of $T_m$ 
are taken from a fixed library $\{ c_{a,b}\}_{0\leq a< \cA, 0\leq b< \cB}$ of
integers periodically, so $p_{ij}=c_{(i\text{ mod }\cA),(j\text{ mod }\cB)}$), then $T_m$ is explicit.
\end{lemma}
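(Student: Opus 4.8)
The plan is simply to verify the two conditions of Definition~\ref{explicitdef}: that the $m$-th object has a well-understood size, and that it can be produced in time polynomial in that size. A tensor in $\BC^m\ot\BC^m\ot\BC^m$ has size $\Theta(m^3)$ (each of the $m^3$ coordinates must be written down), so it suffices to exhibit an algorithm that, on input $m$, outputs all $m^3$ coordinates of $T_m$ in time polynomial in $m$. Equivalently, it is enough to compute a single coordinate $t_{i'j'k'}$ in time $\mathrm{poly}(\log m)$ and loop over all index triples.

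First I would fix, once and for all, the bijection between the index set $\{-k,\dots,k\}$ (with $m=2k+1$) and $[m]$, so that each index is an integer of $O(\log m)$ bits. Given a triple $(i,j,k')$ of such indices, whether $(i,j,k')$ lies in $\tsupp(T_k)$ as described in \eqref{Tk} is decided by the single linear relation $i+j+k'=0$ — the inequalities $|i|,|j|,|i+j|\le k$ are automatic once the indices already lie in the prescribed range — which is a constant number of additions and comparisons on $O(\log m)$-bit integers, hence $\mathrm{poly}(\log m)$ time. If the triple is not in the support, the algorithm outputs $0$.

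If the triple is in the support, the coordinate equals $p_{ij}=c_{(i\bmod \cA),(j\bmod\cB)}$. Since $\cA$ and $\cB$ are \emph{fixed} constants independent of $m$, and the library $\{c_{a,b}\}$ is a \emph{fixed} finite table of \emph{fixed} integers, computing $i\bmod\cA$ and $j\bmod\cB$ is again a constant number of arithmetic operations on $O(\log m)$-bit integers, and reading off the corresponding library entry is $O(1)$. Thus each coordinate is produced in $\mathrm{poly}(\log m)$ time, the whole tensor in $m^3\cdot\mathrm{poly}(\log m)=\mathrm{poly}(m)$ time, which is polynomial in the size $\Theta(m^3)$; so $T_m$ is explicit.

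There is essentially no obstacle here — this is the ``obvious'' lemma the authors flag for later use. The one point worth isolating, and the reason the hypothesis insists on a \emph{fixed} finite set of integers repeated with a \emph{fixed} period, is that without it the library entries (or the cost of the modular reductions, or the number of distinct values) could grow with $m$ and destroy the polynomial bound; under the stated hypotheses this cannot occur, and the estimate above goes through verbatim.
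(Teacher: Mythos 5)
Your proof is correct and is exactly the routine verification the paper has in mind when it declares the lemma ``obvious'' and records it without proof: a fixed finite library accessed by a fixed-period modular lookup makes each coordinate computable in $\mathrm{poly}(\log m)$ time, hence the whole tensor in $\mathrm{poly}(m)$ time. Nothing to add.
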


The tensors in our  sequence will be a sum of tensors of the types from  Lemmas \ref{expla} and \ref{explb}, and thus
also explicit.

\section{Outline of the proof}
We need to 
  find explicit $p_{ij}$ for the sequence \eqref{Tk}  and to lower bound the border ranks of   $T_{k,q}$
  as in Proposition \ref{tightbapolarok} for all
sufficiently large $k$ and some
useful $q=q(k)$.
We first determine $q$.

Let $I_q$ denote the index set kept  to obtain $T_{k,q}$,  so $|I_q|=2k+1-q$.
Write  $I_q=I_{-}\sqcup I_0\sqcup I_{+}$ for the negative, zero, and positive indices appearing in $I_q$.
Fix a large constant $K$   to be determined later.
In what follows $0<\ep<\frac 1{42}$.

{\bf Set up}: Let $q$ be the largest index   such that in each of $I_-,I_+$ there exist
one of 
\begin{enumerate}
\item Three indices with absolute value at least $\frac k2$, or
\item Five indices with absolute value smaller than $\frac k2$ such that the
absolute value of the difference
of the smallest and largest one is at most $K$.
\end{enumerate}

Without loss of generality we may assume that for $q+1$ both conditions fail for $I_-$. In that 
case we have $|I_-|\leq \frac{2k}{K}+7$ because there can be at most 
$3$ indices at least $\frac k2$ and at most $4\lceil \frac{2k}K\rceil$ indices less than $\frac k2$. Hence
$q\geq k-(\frac{2k}{K}+7)$. 

The best known polynomials for bounding border rank so far are the Koszul flattenings which
we review in \S\ref{kossect}.
We use them to prove the following lemmas:

\begin{lemma}\label{claim:no5large}
If  
\be\label{claimhyp}
|I_{+}\cap[0,\frac{3}{7}k]|> \frac{4}{K}\frac{3}{7}k+4\ \ {\rm or}\ \ |I_{-}\cap[-\frac{3}{7}k,0]|> \frac{4}{K}\frac{3}{7}k+4,
\ene
then $\ur(T_k)\geq (2+\ep)(2k+1)$ for $k>\frac{189K+21K\ep-252+56K^2}{K+78-42K\ep}$.
\end{lemma}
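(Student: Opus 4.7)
The plan combines three ingredients: the border-substitution inequality $\ur(T_k) \geq q + \ur(T_{k,q})$ of Proposition~\ref{tightbapolarok} together with the Setup's estimate $q \geq k - \tfrac{2k}{K} - 7$; a pigeonhole step extracting five tightly clustered indices from the hypothesis; and a Koszul-flattening lower bound on $\ur(T_{k,q})$ activated by that cluster.

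By the symmetry of the disjunction, assume the first case $|I_+ \cap [0,\tfrac{3}{7}k]| > \tfrac{12k}{7K} + 4$. Partition $[0,\tfrac{3}{7}k]$ into at most $\lceil \tfrac{3k}{7K} \rceil$ sub-intervals of length $K$: if every sub-interval contained at most four elements of $I_+$, then $|I_+ \cap [0,\tfrac{3}{7}k]| \leq 4 \lceil \tfrac{3k}{7K} \rceil \leq \tfrac{12k}{7K} + 4$, contradicting the hypothesis. So some sub-interval contains five indices $i_1 < \dots < i_5 \in I_+$ with $i_5 - i_1 \leq K$ and $i_5 \leq \tfrac{3}{7}k + K$, and the corresponding $A$-slices $(T_{k,q})_{i_\ell} = \sum_j p_{i_\ell,j}\, b_j \ot c_{-i_\ell - j}$ are shifted copies of a common near-diagonal pattern in $B \ot C$---exactly the configuration anticipated by Lemma~\ref{neardiagcase}.

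The next step is to apply the $A$-direction Koszul flattening $(T_{k,q})_A^{\wedge p} : \Lambda^p A \ot B^* \to \Lambda^{p+1} A \ot C$ at $p = \lfloor (|I_q|-1)/2 \rfloor$, whose rank satisfies $\trank\bigl((T_{k,q})_A^{\wedge p}\bigr) \leq \binom{|I_q|-1}{p} \ur(T_{k,q})$. Lemma~\ref{neardiagcase} exhibits an explicit non-zero polynomial $P$ in the entries $\{p_{ij}\}$ whose non-vanishing forces $\trank\bigl((T_{k,q})_A^{\wedge p}\bigr) \geq \bigl(2(2k+1) - O(1)\bigr) \binom{|I_q|-1}{p}$, equivalently $\ur(T_{k,q}) \geq 2(2k+1) - O(1)$. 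Certifying $P \neq 0$ on the prescribed assignment is the remaining obligation: for $i \leq \tfrac{3}{7}k$ the $p_{ij}$'s come from the periodic mod-$4K$ library of Lemma~\ref{explb}, which is \emph{designed} so that no distinct subset of its entries satisfies any polynomial obstruction produced by Lemma~\ref{neardiagcase} (in particular $P$); for $i > \tfrac{3}{7}k$ the $p_{ij}$'s are distinct primes (Lemma~\ref{expla}), on which any fixed non-zero polynomial identity must fail.

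Combining everything gives
$$
\ur(T_k) \geq q + \ur(T_{k,q}) \geq \Bigl(k - \tfrac{2k}{K} - 7\Bigr) + \bigl(2(2k+1) - O(1)\bigr),
$$
and requiring the right-hand side to exceed $(2+\ep)(2k+1)$ reduces after an elementary rearrangement to the explicit threshold $k \geq \tfrac{189K + 21K\ep - 252 + 56K^2}{K + 78 - 42K\ep}$. The main obstacle is the content of Lemma~\ref{neardiagcase}: writing down an \emph{explicit} polynomial $P$ that certifies near-maximal rank of the Koszul flattening on the five-index cluster, and engineering the mod-$4K$ library so that $P$ is provably avoided. The pigeonhole, the border-substitution bookkeeping, and the final arithmetic are mechanical by comparison.
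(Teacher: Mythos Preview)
Your proposal has a genuine gap in the Koszul step. Lemma~\ref{neardiagcase} is a $p=2$ Koszul-flattening bound for tensors in $\BC^5\ot\BC^n\ot\BC^n$: it projects $A$ onto the span of the five clustered indices $i_1,\dots,i_5$ and yields $\ur(T_{k,q})\geq \tfrac53 n-\tfrac83 K$, not the near-$2n$ bound you assert. The $p=2$ flattening is intrinsically capped at $\tfrac53 n$ (see \S\ref{kossect}), so invoking it at $p=\lfloor(|I_q|-1)/2\rfloor$ and claiming $\ur(T_{k,q})\geq 2(2k+1)-O(1)$ is not what the lemma says and not what any version of it could give. Moreover, to apply the lemma one must first restrict $B^*\ot C^*$ so that the $i_3$-slice becomes an honest invertible diagonal; this forces $n=2k+1-i_3$, a further loss you do not account for.

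Because the Koszul input is only $\tfrac53(2k+1-i_3)-\tfrac83K$, the base estimate $q\geq k-\tfrac{2k}{K}-7$ is not enough: plugging in $i_3$ near $\tfrac37k$ gives roughly $\tfrac{76}{21}k<4k$, which falls short of $(2+\ep)(2k+1)$. The paper repairs this by choosing the \emph{smallest} five-cluster in $I_+\cap[0,\tfrac37k]$, so that $\{1,\dots,i_3-1\}$ can contain at most $4\lfloor i_3/K\rfloor+3$ surviving indices; this sharpens $q$ to $q\geq \bigl(k-\tfrac{2k}{K}-7\bigr)+i_3-\tfrac{4}{K}(i_3-1)-3$. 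The extra $+i_3(1-\tfrac4K)$ in $q$ exactly compensates the $-\tfrac53 i_3$ in the Koszul bound, and the trade-off is worst at $i_3=\tfrac37k-2$, which is where the stated threshold for $k$ (with its $56K^2$ term, coming from the $\tfrac83K$ loss after clearing denominators) is actually derived. Your ``elementary rearrangement'' cannot reproduce that threshold from the quantities you wrote down, because the key refinement of $q$ is missing.
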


\begin{proof}  
First suppose $|I_{+}\cap[0,\frac{3}{7}k]|> \frac{4}{K}\frac{3}{7}k+4$.
We may find five indices $i_1<\dots<i_5\leq \frac{3}{7}k$ in $I_{+}$ with $i_5-i_1\leq K$. Take
the  smallest possible such indices,
and  write $i_3-1=uK+r$, where $u=\lfloor \frac {i_3}k\rfloor$, then
dividing $\{ 1\hd i_3-1\}$ into $u+1$ sets, $\{ aK+1\hd (a+1)K\}$, $0\leq a\leq u-1$, and $\{ uK+1\hd i_3-1\}$,
the first $u$ cannot contain more than $4$ surviving indices, and the last cannot contain more than $3$, so
this interval contains at most $4u+3$ surviving indices.
We conclude
  $q\geq [k-(\frac{2k}{K}+7)]+ i_3-\frac 4K (i_3-1)-3$.
Restrict $T$ to the subspace of $B^*\ot C^*$ where  $i_3$ becomes the diagonal.
By Lemma \ref{neardiagcase}, with $n=2k+1-i_3$ and $C=K$,  
we have $\ur(T_{k,q})\geq \frac 53(2k+1-i_3)-\frac 83K$.
Since  $i_3\leq \frac{3}{7}k-2$ we have
$$\ur(T_k)\geq  \ur(T_{k,q})+q\geq \frac{  56K^2 - (85k - 147)K + 78k - 252}{21K}.
$$
In order to have $\ur(T_k)\geq  (2+\ep)(2k+1)$ we 
need
$$
k\geq  \frac{7(8K^2 + 3K\ep  + 27K -36)}{K-42 K\ep - 78}
$$
At this point  we may assume $|I_{+}\cap[0,\frac{3}{7}k]|\leq \frac{4}{K}\frac{3}{7}k+4$ 
and thus  $q\geq k-(\frac{2k}{K}+7)+\frac{3(K-4)}{7K}k-4$. Using this, we   conclude   in the
case $|I_{-}\cap[-\frac{3}{7}k,0]|> \frac{4}{K}\frac{3}{7}k+4$ by the same reasoning as above with an even better bound.  
\end{proof}

In light of Lemma \ref{claim:no5large}, we may henceforth assume
\be\label{qbnd}
q\geq  [k-(\frac{2k}{K}+7)]+[\frac 37k-(\frac{12k}{7K}+4)]=\frac{10}{7}k-\frac{26k}{7K}-11.
\ene
 
\begin{lemma} \label{lem:far} Assumptions as above.
If  $|I_{-}\cap ( \frac k2,k]|\geq 3$,
then
$\ur(T_k)\geq (2+\ep)(2k+1)$ when $ 
K\geq 
\frac{39}{1-21\ep}
$ 
and
$ 
k\geq \frac{21(13+\ep)K}{ 2(K-21\ep K -39)}
$.
\end{lemma}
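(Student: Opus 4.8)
The plan is to apply the border substitution method, in the form of Proposition~\ref{tightbapolarok}, with the value of $q$ recorded in \eqref{qbnd}, so that $\ur(T_k)\ge q+\ur(T_{k,q})$ and it remains only to bound $\ur(T_{k,q})$ from below; the three far indices of $I_-$ are exactly what will make that lower bound large enough.

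First I would use the three indices $i_1<i_2<i_3$ of $I_-$ with $|i_\ell|\in(\tfrac k2,k]$ to choose intervals in $B^*$ and $C^*$, and pass to the restriction $\hat T$ of $T_{k,q}$ to those intervals, keeping all slice indices in $I_q$. Since each slice $T_i$ of $T_{k,q}$ is supported on the anti-diagonal $j+k'=-i$, this $\hat T$ is a banded tensor in $\BC^{\hat a}\ot\BC^{\hat n}\ot\BC^{\hat n}$ whose set of band offsets is a shift of a subset of $-I_q$; I would pick the intervals so as to keep both $\hat n$ and the number $\hat a$ of surviving slices as large as possible, and record $\hat a$ and $\hat n$ in terms of $k$, $K$, the $|i_\ell|$, and \eqref{qbnd}. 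The essential role of the three far indices is that their offsets sit at distance more than $\tfrac k2$ --- far larger than the window length $K$ that controls the near indices --- from the bulk of the remaining offsets, so the offset set of $\hat T$ is genuinely spread out; this is precisely the hypothesis under which a Koszul flattening of $\hat T$ can be made to have near-maximal rank.

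Next I would feed $\hat T$ to the Koszul flattening machinery of \S\ref{kossect}: with the flattening $\hat T_A^{\wedge p}\colon\Lambda^pA^*\ot B^*\to\Lambda^{p+1}A^*\ot C$, with $p$ as large as the dimension count permits, I would show that the presence of the three far slices forces $\rk\hat T_A^{\wedge p}$ close enough to its maximum that the Koszul inequality $\ur(\hat T)\ge\rk\hat T_A^{\wedge p}/\binom{\hat a-1}{p}$ yields $\ur(\hat T)\ge c\hat n-O(K)$ for a favorable constant $c$. I expect this rank estimate to be the main obstacle. It reduces to the non-vanishing, at the chosen values of the $p_{ij}$, of an explicit family of minors: the far-index entries are distinct primes and the near-index entries are drawn periodically from the library of Lemma~\ref{explb} so as to avoid the zero loci of the relevant polynomials of Lemma~\ref{neardiagcase}.

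Finally, from $\ur(T_{k,q})\ge\ur(\hat T)$ and $\ur(T_k)\ge q+\ur(T_{k,q})$, substituting \eqref{qbnd} converts the target inequality $q+\ur(T_{k,q})\ge(2+\ep)(2k+1)$ into a linear inequality in $k$ whose coefficient of $k$ is a rational function of $K$ and $\ep$; that coefficient is positive exactly when $K\ge\frac{39}{1-21\ep}$ (here $1-21\ep>0$ since $\ep<\tfrac1{42}$), and solving the inequality for $k$ under that hypothesis produces the stated threshold $k\ge\frac{21(13+\ep)K}{2(K-21\ep K-39)}$. This last step is mechanical; the work lies in the two middle steps --- choosing the intervals and counting $\hat a$ and $\hat n$, and, above all, the Koszul flattening rank estimate.
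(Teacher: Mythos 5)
Your overall framework — border substitution to pass to $T_{k,q}$ with $q$ bounded as in \eqref{qbnd}, Koszul flattening on what remains, and then mechanical conversion of $q+\ur(T_{k,q})\geq(2+\ep)(2k+1)$ into the stated thresholds for $K$ and $k$ — is the right one, and the final algebraic step is exactly what the paper does. But the middle of the argument, which you yourself flag as the main obstacle, is both left unresolved and set up in a way that diverges from (and would not reproduce) the paper's argument.

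The paper does \emph{not} restrict $B^*$ and $C^*$ to intervals in this lemma; that is the mechanism of the near-diagonal case (Lemma~\ref{neardiagcase}, used in Lemma~\ref{claim:no5large}), and you appear to have conflated the two. Here the paper keeps $n=2k+1$ in full and instead projects the slice space $A$ onto a specific $5$-dimensional subspace: the three far indices $y_3,y_4,y_5\in I_-$ (which you identify) \emph{together with} the smallest and largest indices $y_1>y_2$ of $I_+\setminus[0,\tfrac k2]$, which your outline never introduces. Those two upper indices are not decorative: the entire rank bound of Lemma~\ref{primecase} is $\ur(T')\geq n+y_1-\tfrac23 y_2$, and its hypothesis $y_1-y_2<y_3$ is exactly what the far/near dichotomy in the Set Up guarantees, while the discrepancy $\tfrac k2-(y_1-y_2)-1$ is the extra contribution added to \eqref{qbnd}. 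Without $y_1,y_2$ there is no term growing linearly in $k$ beyond $n$ in the Koszul bound, and the final inequality does not close. Also, the flattening used is specifically $p=2$ on a $\BC^5$ slice space (the paper is explicit that it only ever uses $p=2$); "with $p$ as large as the dimension count permits" is not what happens. Finally, the non-vanishing you defer to distinct primes and periodic entries is already packaged in Lemma~\ref{primecase} (distinct primes only — all five of $y_1\hd y_5$ lie beyond $\tfrac37 k$ in absolute value, hence in the prime region), so no appeal to Lemma~\ref{explb} or Lemma~\ref{neardiagcase} is needed in this case. In short: correct frame and correct last step, but the crucial middle is a different construction than the one that actually works, and the key quantitative ingredient (the five-semi-diagonal bound $n+y_1-\tfrac23 y_2$ of Lemma~\ref{primecase} applied with $n=2k+1$) is missing.
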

\begin{proof}
Let $y_3,y_4,y_5$ be the three indices in $I_{-}$ with absolute value greater than $\frac{k}{2}$. Let 
$y_1$ and $y_2$ be respectively the largest and smallest index in $I_{+}\setminus [0, \frac{k}{2} ]$, so we may add
$\frac k2-(y_1-y_2)-1$ to \eqref{qbnd}. 
By Lemma \ref{primecase} with $n=2k+1$ we obtain:
\begin{align*}\ur(T_k)&\geq \ur(T_{k,q})+q\\
&\geq 
2k+1+y_1-\frac{2}{3}y_2+[\frac{10}{7}k-\frac{26k}{7K}-11]+[\frac{k}{2}-(y_1-y_2)-1]\\
&\geq \frac{86}{21}k-\frac{26}{7}\frac kK-7,
\end{align*}
where to obtain the last  line we used that $y_2> \frac{k}{2}$.


This bound will be greater than $(2+\ep)(2k+1)$ when
$$
K\geq 
\frac{39}{1-21\ep}
$$
and
$$
k\geq\frac{21(13+\ep)K}{ 2(K-21\ep K -39)}
$$
so  we conclude.
\end{proof}
\begin{lemma} \label{lem:near}
If  $I_{-}$ has five  indices with absolute value smaller than $\frac{k}{2}$
such that the difference of the smallest and largest is at most $K$, 
then
$\ur(T)\geq (2+\ep)(2k+1)$ when
$
K\geq \frac{78}{1-42\ep}
$
and
$
k\geq \frac{21(13+ \ep)K}{K-42\ep K-228}.
$
\end{lemma}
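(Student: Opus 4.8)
The plan is to follow the same template as the proof of Lemma \ref{lem:far}, replacing the three large negative indices by the five near-diagonal negative indices, and the application of Lemma \ref{primecase} by an application of Lemma \ref{neardiagcase}. First I would name the five indices: let $y_1<\dots<y_5$ be the five elements of $I_{-}$ with $|y_\ell|<\frac k2$ and $y_5-y_1\leq K$, and let $i_3:=y_3$ be the middle one. Restricting $T$ to the subspace of $A^*\ot C^*$ in which the index $i_3$ becomes the diagonal, Lemma \ref{neardiagcase} (with $n=2k+1-|i_3|$ and $C=K$) gives $\ur(T_{k,q})\geq \frac 53(2k+1-|i_3|)-\frac 83K$. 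Since $|i_3|<\frac k2$, this is at least $\frac 53(\frac32 k+1)-\frac 83K=\frac 52 k+\frac 53-\frac 83K$.

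Next I would account for the extra indices that can be kept. As in Lemma \ref{claim:no5large}, writing $|i_3|-1=uK+r$ with $u=\lfloor |i_3|/K\rfloor$ and partitioning $\{1,\dots,|i_3|-1\}$ into $u$ blocks of length $K$ plus a remainder block, the minimality of the chosen five indices forces each of the $u$ full blocks to contain at most $4$ surviving indices and the remainder at most $3$; so among the negative indices of absolute value below $|i_3|$ at most $4u+3\leq \frac{4|i_3|}K+3$ survive. Combined with the bound \eqref{qbnd} already in force, this yields a lower bound on $q$ of the shape $q\geq \frac{10}{7}k-\frac{26k}{7K}-11+|i_3|-\frac 4K|i_3|-3$; because $i_3$ is a free parameter bounded by $\frac k2$, the worst case is $|i_3|$ as small as possible, and one checks the resulting expression in $|i_3|$ is increasing in $|i_3|$, so it suffices to treat it at the relevant endpoint. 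Adding $\ur(T_{k,q})$ and $q$, the $|i_3|$-dependence combines to $|i_3|(1-\frac 4K)-\frac 53|i_3|<0$ in the regime $K\geq 78$, hence again the bound is minimized by taking $|i_3|$ as large as allowed, namely $|i_3|\to \frac k2$; substituting gives $\ur(T_k)\geq \ur(T_{k,q})+q\geq \frac{86}{21}k-\frac{26}{7}\frac kK-c$ for an explicit constant $c$, matching the shape in Lemma \ref{lem:far}.

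Finally I would impose $\ur(T_k)\geq (2+\ep)(2k+1)$ and solve the resulting linear inequality in $k$ and $K$: requiring the coefficient of $k$ on the left (which is $\frac{86}{21}-\frac{26}{7K}$) to exceed $2(2+\ep)=4+2\ep$ forces $K\geq \frac{78}{1-42\ep}$, and then isolating $k$ gives $k\geq \frac{21(13+\ep)K}{K-42\ep K-228}$, which are exactly the stated hypotheses. I would double-check the arithmetic of the constant term so that the two inequalities as written are truly sufficient.

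The main obstacle, as in the companion lemmas, is bookkeeping rather than a new idea: one has to verify carefully that restricting to make $i_3$ the diagonal index and then invoking Lemma \ref{neardiagcase} is legitimate for the \emph{negative} side (so that one still lands in the hypotheses of that lemma after the restriction), and that the count of surviving indices of smaller absolute value is correctly $4u+3$ and not off by a constant — an error of a few units propagates into the final $k$-threshold. The monotonicity argument in $|i_3|$ (which lets us reduce to a single endpoint rather than optimizing) is the one place where a sign check on $1-\frac 4K-\frac 53<0$ genuinely matters, and I would state it explicitly; everything else is the same linear-inequality chase already carried out for Lemmas \ref{claim:no5large}, \ref{lem:far}.
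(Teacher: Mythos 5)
Your approach diverges from the paper's at the crucial step: the paper does \emph{not} invoke Lemma~\ref{neardiagcase} in the proof of Lemma~\ref{lem:near}. Instead, it first invokes Lemma~\ref{claim:no5large} to conclude that the five close-together indices in $I_-$ must all have absolute value at least $\tfrac 37 k - 3$, and then it applies Lemma~\ref{primecase} exactly as in Lemma~\ref{lem:far}, using three of those five negative indices together with $y_1,y_2$ taken to be the largest and smallest elements of $I_+\setminus[0,\tfrac 47 k]$ (adding $\tfrac 37 k-(y_1-y_2)-1$ to \eqref{qbnd}, using $y_2\geq \tfrac 47 k$).

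Your substitution of Lemma~\ref{neardiagcase} for Lemma~\ref{primecase} does not work here, and not for a cosmetic reason. The ``same bound holds'' clause of Lemma~\ref{neardiagcase} is only valid when the chosen new diagonal $i_3$ is a semidiagonal of $T_k$ at distance at most $\tfrac 37 k$ from the main diagonal. But in the situation of Lemma~\ref{lem:near}, after the reduction via Lemma~\ref{claim:no5large}, the five indices lie in the band $[\tfrac 37 k-3,\tfrac k2)$, so for large $k$ the middle index $i_3$ is (except for three exceptional positions) strictly beyond $\tfrac 37 k$. This band is precisely where the paper's explicit assignment switches from the periodic library (needed for the non-vanishing of the polynomials $P^j$ in Lemma~\ref{neardiagcase}) to distinct primes (needed for the binomial-cubic non-vanishing in Lemma~\ref{primecase}). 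So for the explicit tensor of Theorem~\ref{mainthm} the hypotheses of Lemma~\ref{neardiagcase} simply are not met at $i_3$, whereas the hypotheses of Lemma~\ref{primecase} are. This is the obstacle you flagged at the end but did not resolve, and it sinks the proposed argument as written.

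Two further problems in your bookkeeping. First, adding $|i_3|-\tfrac 4K|i_3|-3$ (removed \emph{negative} indices of absolute value below $|i_3|$) on top of \eqref{qbnd} double-counts: the term $k-(\tfrac{2k}{K}+7)$ in \eqref{qbnd} already lower-bounds \emph{all} removed negative indices. In Lemma~\ref{claim:no5large} the analogous extra summand is legitimate because there the newly counted removed indices of absolute value below $i_3$ lie on the opposite ($I_+$) side. Second, your monotonicity discussion is internally inconsistent: you first assert the expression is increasing in $|i_3|$ with the worst case at small $|i_3|$, and a sentence later conclude (correctly, from the sign of the $|i_3|$-coefficient) that the worst case is at $|i_3|\to\tfrac k2$. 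Fixing these and redoing the arithmetic would not recover the stated thresholds in any case, since you would need Lemma~\ref{primecase} rather than Lemma~\ref{neardiagcase} to treat the explicit tensor.
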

\begin{proof}
By Lemma \ref{claim:no5large} we may assume that these five indices have absolute value at least $\frac{3}{7}k-3$.
We may proceed as in Lemma \ref{lem:far} fixing $y_2$ and $y_1$ to be respectively the smallest and 
largest index in $I_{+}\setminus [0,\frac{4}{7}k]$, so we may add $\frac{3}7k-(y_1-y_2)-1$ to \eqref{qbnd}.

We obtain:
\begin{align*}
\ur(T_k)&\geq \ur(T_{k,q})+q\\
&\geq \left(2k+1+y_1-\frac 23 y_2\right)+ \frac{10}{7}k-\frac{26k}{7K}-11+\frac 37 k-(y_1-y_2)-1\\
&\geq (4+\frac 1{21})k-\frac{26}{7}\frac kK -11,
\end{align*}
where in the last line we used that $y_2\geq \frac 47k$.
This is at least $(2+\ep)m$ when
\be\label{Keqn}
K\geq \frac{78}{1-42\ep}
\ene
and
$$
k\geq \frac{21(13+ \ep)K}{K-42\ep K-78}
$$
and we conclude.
\end{proof}

\begin{proof}[Proof of Theorem \ref{mainthm}]
One of the cases in Lemmas \ref{claim:no5large},\ref{lem:far}, or  \ref{lem:near} must hold, which proves the theorem for $k$ large enough.

In order for assumptions of all lemmas to hold,  we need $K\geq \frac{78}{1-42\ep}$ and 
$$
k\geq\tmax\left\{ \frac{21(13+ \ep)K}{K-42\ep K-78}, \frac{189K+21K\ep-252+56K^2}{K+78-42K\ep}\right\}.
$$
Taking the estimate $K=\frac{78}{1-42\ep}+1$
we obtain
$$
k\geq 
\frac{ 37044\ep^3 - 82908\ep^2 - 983829\ep + 364175}{(1-42\ep)^3}
$$
In particular, when 
$\ep=.001$ it suffices to take  $k\geq 413085$, when 
$\ep=.01$,  it suffices to take  $k\geq 1.82\times 10^6$,  
 and when $\ep=.02$  it suffices to take 
$k\geq 8.41\times 10^7$.
\end{proof}

\section{Koszul  flattenings}\label{kossect}

In \S\ref {subsec: generalized flattenings} we review Koszul flattenings.
We then, in \S\ref{explicitpf}, prove Theorem \ref{thm:explicit}. In \S\ref{caseone}, \S\ref{casetwo},
we   prove lower border rank bounds for   two types of tensors arising in the proof of Theorem \ref{mainthm}.

\subsection{Definition}\label{subsec: generalized flattenings}
Fix $p<\frac 12 \tdim A$. There is a natural inclusion
$A\subset \La p A^*\ot \La {p+1}A$ inducing an
inclusion $A\ot B\ot C\subset (\La p A^*\ot B)\ot (\La{p+1}A\ot C)$.
Thus given $T\in A\ot B\ot C$, we obtain a linear map
$T_{A}^{\ww p}: \La p A\ot B^*\ra \La{p+1}A\ot C$.

In  bases $\{a_i\}$, $\{b_j\}$, $\{ c_k\}$ of   $A,B,C$, where 
$T = \sum_{ijk} T^{ijk} a_i \otimes b_j \otimes c_k \in A \otimes B \otimes C$, 
  the linear map $T_A^{\ww p}$ is
\begin{align*}
T_A^{\ww p}: \Lambda^p A \ot B^* &\to \Lambda^{p+1}A\ot C \\
X\ot \beta & \mapsto \textsum_{ijk}T^{ijk}\beta(b_j)(a_i\wedge X) \ot c_k.
\end{align*}

Then  \cite[Proposition 4.1.1]{MR3081636} states 
\begin{equation}\label{kozinq}
\ur(T)\geq \frac{\trank(T_{A}^{\ww p})}{\binom {\tdim(A)-1}p}.
\end{equation}

In practice, one takes a subspace ${A'}^*\subseteq A^*$ of dimension $2p+1$ and 
restricts $T$ (considered as a trilinear form) to ${A'}^* \times  B^* \times 
C^*$ to get an optimal bound, so the denominator $\binom{\dim(A)-1}{p}$ is 
replaced by $\binom{2p}{p}$ in \eqref{kozinq}. Write $\phi : A \to A/ 
({A'}^*)^\perp=:A'$ for the projection onto the quotient: the corresponding Koszul 
flattening map gives a lower bound for $\uR(\phi(T))$, which, by linearity, is 
a lower bound for $\uR(T)$.

The $p=1$ case is equivalent to Strassen's equations \cite{Strassen505}, which, for tensors in 
$\BC^m\ot \BC^m\ot \BC^m$ can at best prove border rank lower bounds of $\frac 32m$.
We will utilize the $p=2$ case, which 
can at best prove border rank lower bounds of $\frac 53m$.
Thus we will need to use the border substitution method to kill off  at least a 
$(\frac 13+\ep)m$-dimensional space before using Koszul flattenings.
Notice that in the Set Up,   we actually kill off at a space of dimension  
$\sim\frac{1}{2}m$.

Let $T\in \BC^5\ot \BC^n\ot \BC^n$ and write
$T=a_1\ot X_1+\cdots + a_5\ot X_5$.
Order the bases of $\La 2\BC^5$, $a_i\ww a_j$, with  
$(ij)=12,13,14,15,23,24,25,34,35,45$
and of $\La 3\BC^5$ as
$234,235,245,345,123,124,125,134,135,145$.
Then the $10n\times 10n$ Koszul flattening matrix takes
the block form
\be\label{big10}T_{A}^{\ww 2}=
\begin{pmatrix}
0&0&0&0& X_4&-X_3&0&X_2&0&0 \\
0&0&0&0& X_5&0&-X_3&0&X_2&0 \\
0&0&0&0& 0&X_5&-X_4&0&0&X_2 \\
0&0&0&0& 0&0&0&X_5&-X_4&X_3 
\\
X_3&-X_2 &0& 0& X_1&0&0&0&0&0 \\
X_4&0 &-X_2& 0& 0&X_1&0&0&0&0 \\
X_5&0 &0&-X_2&        0&0&X_1&0&0&0 \\
0&X_4 &-X_3&0&  0&0&0&X_1&0&0 \\
0&X_5 &0&-X_3&        0&0&0&0&X_1&0 \\
0&0 &X_5&-X_4&        0&0&0&0&0&X_1 \end{pmatrix}
\ene

In our case all the matrices $X_i$ will be zero except on a semi-diagonal.

If $X_1$ is invertible, then, the rank of the matrix \eqref{big10} equals
$6n$ plus the rank of 
the block $4n\times 4n$ matrix with block entries
$( X_iX_1^{-1} X_j-X_jX_1^{-1}X_i)$.
(One gets this matrix with blocks permuted using the basis choice above.)
To see this use the basic identity, for a block matrix with 
$Z$ square and invertible: 
$$
\begin{pmatrix}
0&X\\ Y&Z\end{pmatrix}
=
\begin{pmatrix} 
\Id & XZ\inv \\ 0 &\Id\end{pmatrix}
\begin{pmatrix}
-XZ\inv Y & 0 \\ 0&Z\end{pmatrix} 
\begin{pmatrix} \Id & 0\\
Z\inv Y & \Id
\end{pmatrix}.
$$

We record the above observation:

\begin{proposition}\label{blockred} Let $T=a_1\ot X_1+\cdots + a_5\ot X_5\in \BC^5\ot \BC^n\ot \BC^n$ and let $X_1$ be
of   rank $n$.
If   the rank of the $4n\times 4n$ block matrix with size $n\times n$  blocks  $   X_iX_1^{-1} X_j-X_jX_1^{-1}X_i $
is at least 
$R$, then $\trank(T_{A}^{\ww 2})\geq R+6n$.
\end{proposition}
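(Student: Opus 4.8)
This is precisely the block computation carried out in the paragraph preceding the statement, so the plan is simply to organize that argument carefully. I would start from the explicit matrix $T_A^{\ww 2}$ in \eqref{big10}, which is already displayed in block anti-triangular form: splitting the $\La 2\BC^5$-basis as $(12,13,14,15 \mid 23,24,25,34,35,45)$ and the $\La 3\BC^5$-basis as $(234,235,245,345 \mid 123,124,125,134,135,145)$, the matrix reads
\[
T_A^{\ww 2}=\begin{pmatrix} 0 & P \\ Q & \Id_6\ot X_1 \end{pmatrix},
\]
with $P$ of size $4n\times 6n$, $Q$ of size $6n\times 4n$, and the lower-right block block-diagonal with all six diagonal blocks equal to $X_1$ (this is immediate from reading off rows $123,\dots,145$ and columns $23,\dots,45$ of \eqref{big10}).

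Since $X_1$ has rank $n$, the block $Z:=\Id_6\ot X_1$ is invertible, $Z\inv=\Id_6\ot X_1\inv$, and $\trank(Z)=6n$. Substituting this $Z$ into the factorization identity recorded just before the proposition (with $X=P$ and $Y=Q$) exhibits $T_A^{\ww 2}$ as a product of two invertible block-triangular matrices and the block-diagonal matrix $\operatorname{diag}(-PZ\inv Q,\ Z)$, whence
\[
\trank\big(T_A^{\ww 2}\big)=\trank(Z)+\trank\big(-PZ\inv Q\big)=6n+\trank\big(PZ\inv Q\big).
\]
It then remains to identify the $4n\times 4n$ matrix $PZ\inv Q$. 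Reading off $P$ and $Q$ from \eqref{big10} and using $Z\inv=\Id_6\ot X_1\inv$, each $n\times n$ block of $PZ\inv Q$ is a signed sum of terms $X_iX_1\inv X_j$; carrying out this multiplication block by block one checks that, after a permutation of the block rows and block columns dictated by the chosen orderings of $\La 2\BC^5$ and $\La 3\BC^5$ together with some overall sign changes, $PZ\inv Q$ coincides with the $4n\times 4n$ block matrix whose $(i,j)$-block is $X_iX_1\inv X_j-X_jX_1\inv X_i$. Since permuting block rows/columns and negating rows preserve rank, $\trank(PZ\inv Q)$ equals the rank of that block matrix.

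Combining the two displays: if the block matrix with blocks $X_iX_1\inv X_j-X_jX_1\inv X_i$ has rank at least $R$, then $\trank(T_A^{\ww 2})\ge 6n+R$, which is the assertion. The only genuine work is the bookkeeping in the last paragraph — tracking which pair $(i,j)$, and with which sign, lands in each block of $PZ\inv Q$ — but this is entirely mechanical given the explicit form \eqref{big10}, so no conceptual obstacle arises; the content of the proposition is really just the rank identity for a block matrix with invertible corner applied to \eqref{big10}.
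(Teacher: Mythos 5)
Your proof is correct and follows exactly the same route as the paper: read off the block form of \eqref{big10} with zero upper-left $4\times 4$ block and invertible lower-right block $\Id_6\ot X_1$, apply the displayed factorization identity to reduce the rank computation to $6n$ plus the rank of the Schur complement $PZ^{-1}Q$, and identify that Schur complement (up to block permutation and sign, which do not affect rank) with the commutator block matrix. This is precisely the paper's argument in the paragraph preceding Proposition~\ref{blockred}, just spelled out in slightly more detail.
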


\subsection{Proof of Theorem \ref{thm:explicit}}\label{explicitpf}
We first project $T_6$ to to $\CC^9\otimes \CC^{13}\otimes \CC^{13}$ by considering all possible coordinate projections. By Proposition \ref{tightbapolarok} for one of the projections the border rank drops at least by four. In the algorithm in \S\ref{codesect} all possible ${13}\choose {4}$ projections are considered. 

Next, in each case we consider a general projection to $\CC^5\otimes \CC^{13}\otimes \CC^{13}$. Finally we apply $p=2$ Koszul flattenings described in Section \ref{subsec: generalized flattenings}. The algorithm checks the rank of the flattening matrix is always at least $127$.
We provide code to carry this out in \S\ref{codesect}. Thus:
$$\ur (T_6)\geq \lceil \frac{127}{6}\rceil+4=22+4=26.$$
\qed

\subsection{Border rank lower bounds for tensors  with $T(A^*)$ supported on the diagonal and four  semi-diagonals close to the diagonal}\label{caseone}

\begin{lemma}\label{neardiagcase}
Let $T=a_1\ot X_1+\cdots + a_5\ot X_5\in \BC^5\ot \BC^n\ot \BC^n$ be a tensor such that
$X_1$ is diagonal with nonzero entries on the diagonal, 
 $X_2,X_3$ are semi-diagonals supported on   semi-diagonals
$x_2< x_3$
above the main diagonal and $X_4,X_5$ are semi-diagonals supported on semi-diagonals $x_4<x_5$
below the main diagonal, and assume $x_s<C$ for $s=2\hd 5$ for some constant $C$.

If the entries of $T$ are generic,  then the $p=2$ Koszul flattening matrix drops rank by at most
$16C$.
In particular, 
$\ur(T)\geq \frac 53 n-\frac 83 C$. 
 
Further, if $T$ is the restriction of a subtensor of the tensor $T_k$, where $X_1$ corresponds to a semidiagonal of $T$ of distance at most $\frac{3}{7}k$ to the main diagonal, the same bound holds.
\end{lemma}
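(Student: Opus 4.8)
The plan is to invoke Proposition~\ref{blockred}. Since $X_1$ is diagonal with nonzero diagonal entries it has rank $n$, so the rank of the $10n\times 10n$ Koszul flattening matrix \eqref{big10} equals $6n$ plus the rank of the $4n\times 4n$ block matrix $M$ whose $(i,j)$ block ($i,j\in\{2,3,4,5\}$) is the commutator $[X_i,X_j]_{X_1}:=X_iX_1^{-1}X_j-X_jX_1^{-1}X_i$. As $X_1$ is diagonal and each $X_s$ is supported on a single semidiagonal at distance $x_s<C$, each block $[X_i,X_j]_{X_1}$ is again supported on a single semidiagonal, at signed distance $e_{ij}$ with $|e_{ij}|=x_i+x_j<2C$ when $X_i,X_j$ lie on the same side of the diagonal and $|e_{ij}|=|x_i-x_j|<C$ otherwise; its entries are explicit binomials in the $5n$ parameters (the diagonal entries of $X_1$ and the semidiagonal entries of $X_2,\dots,X_5$). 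It suffices to prove $\trank(M)\ge 4n-16C$, for then $\trank(T_A^{\ww 2})\ge 10n-16C$, and \eqref{kozinq} with $p=2$ (so that the denominator is $\binom{4}{2}=6$) gives $\ur(T)\ge\frac{10n-16C}{6}=\frac53 n-\frac83 C$; this is also the asserted ``drop by at most $16C$''.

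To estimate $\trank(M)$ for generic entries I would specialize: take all diagonal entries of $X_1$ equal to $1$ and the semidiagonal entries of $X_s$ equal to a geometric progression with ratio $\lambda_s$. Each block then becomes $c_{ij}\,D_iD_j\,\Sigma^{e_{ij}}$, where $\Sigma$ is the shift operator, $D_i=\mathrm{diag}(\lambda_i^r)_r$, and $c_{ij}$ is an explicit scalar that is nonzero for generic $\lambda$; left and right multiplication of $M$ by the invertible block-diagonal matrices $\bigoplus_i D_i^{-1}$ (which preserves rank) turns $M$ into $M'=(\widetilde c_{ij}\Sigma^{e_{ij}})_{i,j}$ with scalar coefficients $\widetilde c_{ij}$. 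Replacing each truncated shift $\Sigma^{e_{ij}}$ by the cyclic shift on $\mathbb{C}^n$ changes $M'$ only by a correction of rank at most $\sum_{i\ne j}|e_{ij}|<16C$ (each wrap-around block is supported on at most $|e_{ij}|$ entries, and $\sum_{i<j}|e_{ij}|<8C$), and the resulting matrix has all blocks equal to commuting polynomials in one invertible operator, hence block-diagonalizes over the $n$th roots of unity into the $4\times 4$ matrices $M'(\zeta)=(\widetilde c_{ij}\zeta^{e_{ij}})$. The structural point, which is where the combinatorics of the support of $T_k$ enters, is that the three products of disjoint off-diagonal entries carry the same power of $\zeta$, namely $e_{23}+e_{45}=e_{24}+e_{35}=e_{25}+e_{34}=x_2+x_3-x_4-x_5$; consequently the monomial $\widetilde\kappa\,\zeta^{x_2+x_3-x_4-x_5}$, with $\widetilde\kappa:=\widetilde c_{23}\widetilde c_{45}+\widetilde c_{24}\widetilde c_{35}+\widetilde c_{25}\widetilde c_{34}$, genuinely occurs in $\det M'(\zeta)$, and one checks $\widetilde\kappa$ is a nonzero polynomial in $\lambda_2,\dots,\lambda_5$. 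Since every exponent of $\zeta$ appearing in $\det M'(\zeta)$ has absolute value $<4C$, the Laurent polynomial $\det M'(\zeta)$ is nonzero at all but $O(C)$ of the $n$th roots of unity; combined with the cyclic-shift correction, a bookkeeping of these defects gives $\trank(M)\ge 4n-16C$, and hence $\ur(T)\ge\frac53 n-\frac83 C$ for generic entries.

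For the final assertion I would not repeat this argument: the proof used only that $X_1$ is invertible and that finitely many explicit polynomials in the entries of $T$ do not vanish, namely the scalars $c_{ij}$ (equivalently $\widetilde c_{ij}$), the quantity $\widetilde\kappa$, and any nonvanishing conditions introduced by the specialization. When $T$ is the restriction of a subtensor of $T_k$ with $X_1$ a semidiagonal at distance at most $\frac37 k$ from the main diagonal, all of the entries $p_{ij}$ that enter these conditions lie in the range where the $p_{ij}$ are assigned $4K$-periodically with $C=K$ (Lemma~\ref{explb} and \eqref{Keqn}); by periodicity the a priori $n$-dependent list of nonvanishing conditions collapses to finitely many polynomial conditions on the finite library $\{c_{a,b}\}_{0\le a,b<4K}$, and the library is chosen exactly so that none of them is violated by any admissible subset of the $p_{ij}$. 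Hence $\ur(T)\ge\frac53 n-\frac83 C$ for this explicit $T$ as well.

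I expect the main difficulty to be the precise rank count of the structured matrix $M$ in the second step: obtaining the clean constant $16C$ rather than merely $O(C)$ requires carefully matching the combinatorics of which semidiagonals the commutator blocks occupy against the truncation and wrap-around defects, and verifying that $\det M'(\zeta)$ is not identically zero, i.e.\ that $\widetilde\kappa$ (and not merely the individual $\widetilde c_{ij}$) is a nonzero polynomial. A secondary point requiring care is the reduction, in the third step, of the $n$-indexed family of nonvanishing conditions to a finite list by periodicity.
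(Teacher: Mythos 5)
Your Fourier approach is genuinely different from the paper's. The paper argues directly: the $4n\times 4n$ commutator matrix has semi-diagonal blocks, and after a permutation of rows and columns it becomes block-diagonal with $n$ explicit $4\times 4$ blocks $N^j$ (grouping columns $x_2+j,x_3+j,-x_4+j,-x_5+j$ and the corresponding rows); the determinants $\det N^j =: P^j$, after clearing denominators, are explicit nonzero polynomials in the entries of $T$, and only the $O(C)$ blocks near the boundary $j\le 2C$ or $j\ge n-2C$ can fail. This gives the $16C$ constant and, crucially, produces the explicit polynomial conditions $P^j$ on the entries $p_{ij}$ needed for the last statement.

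Two things in your write-up need attention. First, a technical slip: after you scale by $\bigoplus D_i^{-1}$ on both sides, the block $(j,i)$ becomes $D_j^{-1}(-M_{ij})D_i^{-1}$, which is not the negative transpose of $D_i^{-1}M_{ij}D_j^{-1}$; so the $4\times 4$ matrix $M'(\zeta)$ is not skew-symmetric, and $\det M'(\zeta)$ is not a Pfaffian squared (also the Pfaffian would have a minus sign in the middle term). This does not sink the approach, because the stronger structural fact you are circling around is this: \emph{every} derangement $\sigma$ of $\{2,3,4,5\}$ satisfies $\sum_i e_{i\sigma(i)} = 2\sum_i \sigma_i x_i = 2(x_2+x_3-x_4-x_5)$, so $\det M'(\zeta)$ is a \emph{single} $\zeta$-monomial. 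Once its coefficient (a sum over derangements of signed products of $\widetilde c$'s, not simply your $\widetilde\kappa$) is shown to be generically nonzero, $\det M'(\zeta)\ne 0$ for \emph{all} $\zeta\ne 0$, not merely all but $O(C)$ roots of unity; hence the cyclic matrix has full rank $4n$ and the only loss is the wrap-around correction $<16C$. This removes your own stated worry about $O(C)$ vs.\ $16C$.

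Second, and more seriously, the last paragraph has a genuine gap. Your argument establishes generic nonvanishing by specializing $T$ to a geometric-progression form parametrized by $\lambda_2\vvirg\lambda_5$; the polynomial conditions you extract ($c_{ij}$, $\widetilde c_{ij}$, $\widetilde\kappa$) are polynomials in the $\lambda_s$, not in the actual entries $p_{ij}$ of $T_k$. ``The proof used only that finitely many explicit polynomials in the entries of $T$ do not vanish'' is not correct for your proof as given: for a $T_k$ that is not of geometric-progression form, you have no certificate. To verify the explicit $T_k$, you must exhibit the nonvanishing conditions as polynomials in the $p_{ij}$ themselves; the paper's $P^j=\det N^j$ (which depend on the $p_{ij}$ in a $4C\times4C\times4C$ window, whence periodicity suffices) are exactly these, and your proposal needs them too rather than a remark that the specialization's conditions transfer.
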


\begin{proof}

As the flattening of $T$ contains an invertible matrix, the rank of the second Koszul flattening drops exactly be the drop of the rank of the $4n\times 4n$ commutator matrix described in Proposition \ref{blockred}. This matrix is depicted in Figure \ref{fig:4x4}.

Write $M_{ts}:=X_tX_1^{-1} X_s-X_sX_1^{-1}X_t$, $t<s$, for the four distinct  size $n\times n$ block matrices appearing in Proposition 
\ref{blockred} (here $M_{st}=-M_{ts}$). 
The $M_{ts}$   are semi-diagonal
matrices. Explicitly, $M_{23}$ is supported on semi-diagonal $x_2+x_3$,
$M_{24}$ is supported on semi-diagonal $x_2-x_4$, $M_{25}$ is supported on semi-diagonal $x_2-x_5$,
$M_{34}$ is supported on semi-diagonal $x_3-x_4$, $M_{35}$ is supported on semi-diagonal $x_3-x_5$,
$M_{45}$ is supported on semi-diagonal $-(x_4+x_5)$,
where negative indices mean the semi-diagonal is below the main diagonal. We note that some of the entries on these semi-diagonals may be equal to zero.

By permuting the basis vectors, one obtains a  block diagonal matrix, where the  diagonal blocks are
of size $4\times 4$, namely we group columns
$x_2+j$, $x_3+j$, $-x_4+j$, $-x_5+j$ 
and rows $j-x_2,j-x_3,j+x_4,j+x_5$, and there are $n$ such blocks, as in Figure \ref{fig:4x4}.
Call the resulting $4\times 4$  blocks $N^j$, $1\leq j\leq n$.

It remains to show   almost all matrices $N^j$ have full rank. We note that each of the matrices $M_{ts}$ has at most $4C$ entries on the distinguished semi-diagonal that are not a binomial in the entries of $X_i$'s. These non-binomial entries appear only if one of the column or row indices of $M_{ts}$ is either smaller than $2C$ or greater than $n-2C$. We focus on all $N^j$'s contained in the other rows and columns, i.e.~$2C<j<n-2C$. A direct computation shows that the determinant of each such $N^j$ is, after clearing denominators, an explicit nonzero polynomial $P^j$ in the entries of the matrices $X_i$. 

Hence, whenever the evaluations of the polynomials $P^j$ are nonzero the bounds on ranks given in Lemma
\ref{neardiagcase} hold. This holds if the entries of the tensor are generic. 

We now prove the last statement of the lemma, i.e.~that the rank condition holds when $T$ is a restriction of a subtensor of $T_k$.
For this we note that each $P^j$, irrespective of the choice of $T$ and $j$, depends only on variables corresponding to entries that are contained in a $4C\times 4C\times 4C$ sub-tensor of $T_k$  made of consecutive indices. As $C$ is a fixed constant and $P^j$ are explicit polynomials, the non-vanishing can be achieved by filling one $4C\times 4C\times 4C$ cube with numbers such that any subset of them is not a root of any of the $P^j$'s and then assigning the values of the tensor periodically as in Lemma \ref{explb}.
\end{proof}
\begin{figure}[!htb]\begin{center}
\includegraphics[scale=0.5]{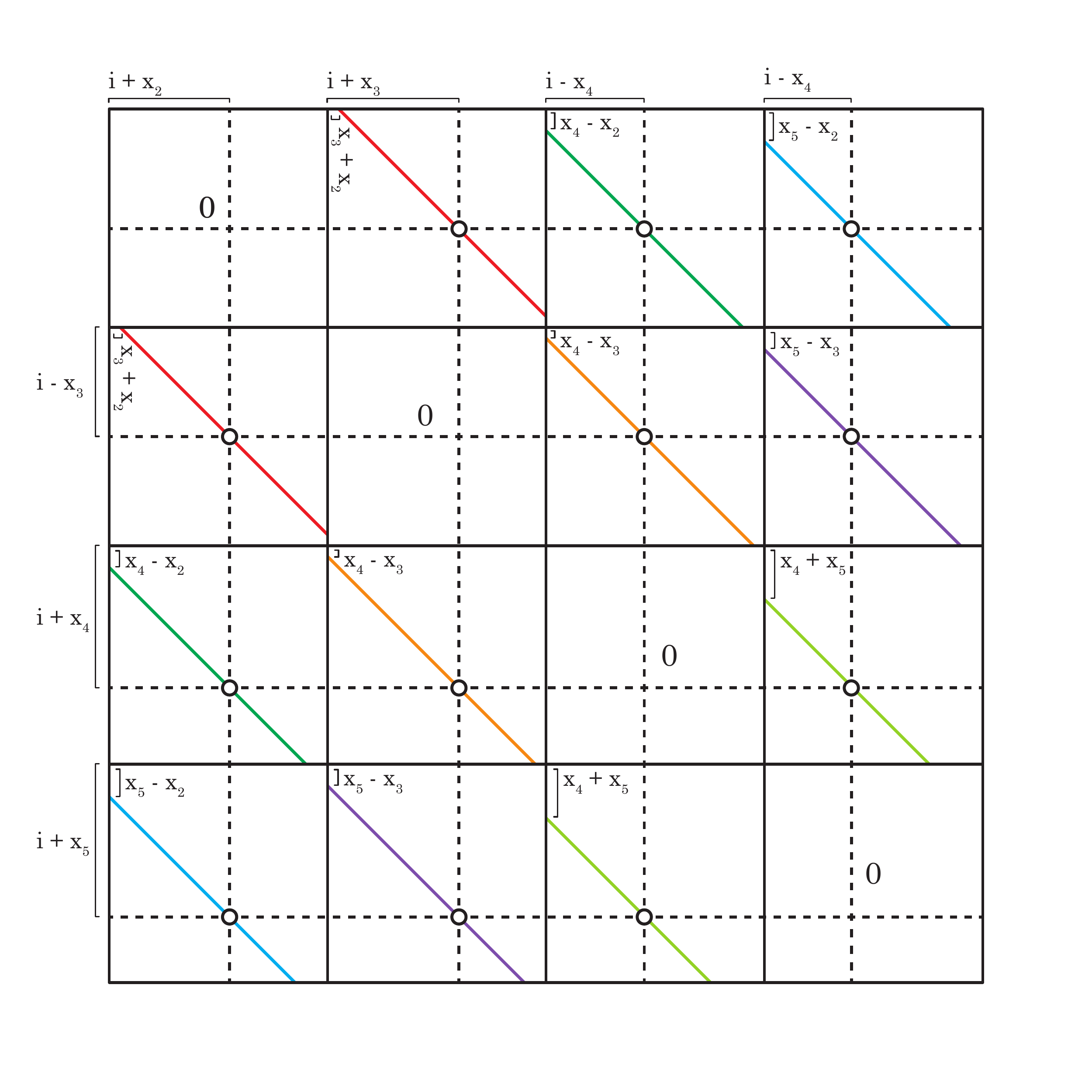}
\caption{The $i$-th $4\times 4$ block $N^i$ has entries corresponding to the intersection of the dashed lines}\label{fig:4x4}
\end{center}
\end{figure}

\subsection{Border rank lower bounds for tensors  supported on $5$ semi-diagonals not equal to the diagonal}\label{casetwo}

\begin{lemma}\label{primecase}
Let $T\in \BC^5\ot \BC^{n}\ot \BC^{n}$ be a tensor supported on 
three semi-diagonals at distances $y_3<y_4<y_5<\frac{n}{2}$ below the diagonal and two semi-diagonals
at distances $y_1>y_2$ above the diagonal such that $y_1-y_2<y_3$.
Assume the entries of $T$ are distinct primes.
Then
$$
\ur(T)\geq n+y_1-\frac 23 y_2.
$$
\end{lemma}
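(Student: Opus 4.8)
The plan is to imitate the structure of the proof of Lemma \ref{neardiagcase}, but adapted to the case where the diagonal slice $X_1$ is itself a semi-diagonal rather than the genuine diagonal. First I would set up coordinates so that $T = a_1\ot X_1 + \cdots + a_5\ot X_5$ with $X_1$ supported on the semi-diagonal at distance $y_1$ above the main diagonal, $X_2$ on the semi-diagonal at distance $y_2$ above, and $X_3,X_4,X_5$ on semi-diagonals at distances $y_3<y_4<y_5$ below. Because $X_1$ is supported on a single semi-diagonal of distance $y_1$, its rank is $n-y_1$, not $n$; so Proposition \ref{blockred} does not apply verbatim. Instead I would restrict attention to the rows and columns on which $X_1$ is invertible — equivalently, pass to the $(n-y_1)\times(n-y_1)$ principal submatrix where $X_1$ acts as an isomorphism — and observe that the Koszul flattening $T_A^{\ww 2}$ restricted to this subtensor has rank at least $6(n-y_1)$ plus the rank of the $4\times$ commutator block matrix built from $X_i X_1^{-1} X_j - X_j X_1^{-1} X_i$, by the same block-elimination identity used in Proposition \ref{blockred}.

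Next I would analyze the commutator matrices $M_{ij} = X_i X_1^{-1} X_j - X_j X_1^{-1} X_i$. Since $X_1^{-1}$ shifts by $-y_1$ and each $X_i$ shifts by $\pm y_i$, each $M_{ij}$ is again supported on a single semi-diagonal, whose offset is an explicit integer combination of the $y_i$. The hypotheses $y_3<y_4<y_5<\frac n2$ and $y_1-y_2<y_3$ are exactly what is needed to guarantee that these six offsets are pairwise distinct and that the semi-diagonals genuinely fit inside the matrix, so that after a suitable permutation of basis vectors the commutator matrix becomes block-diagonal with $4\times 4$ blocks $N^j$, as in Figure \ref{fig:4x4}. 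I would then count how many of these blocks are "interior" (i.e., none of their four nonzero entries is forced to vanish by falling off the edge of the matrix): since each semi-diagonal is missing at most a bounded-in-terms-of-$y_i$ number of entries near the two corners, and there are $n$ blocks total, the number of interior full-rank blocks is at least $n - O(y_5) \ge n - 2y_5$ or so.

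The key new point, and the place where the "distinct primes" hypothesis does the work, is showing that the determinant $\det N^j$ — which after clearing denominators is an explicit polynomial in the prime entries — does not vanish. Unlike in Lemma \ref{neardiagcase}, here we cannot appeal to genericity: the entries are fixed primes. So the hard part will be arguing that $\det N^j \ne 0$ for a prescribed assignment. I expect this is handled by a valuation / unique-factorization argument: each entry of $N^j$ is a ratio of products of distinct primes, and one shows that in the signed sum defining $\det N^j$ one monomial has a prime factorization that cannot be matched by any other (e.g. a prime dividing exactly one term to an odd power), so cancellation is impossible. One then sums the bounds: $\ur(T) \ge \frac{1}{\binom{4}{2}}\big(6(n-y_1) + 4(n - O(y_5))\big)$ — wait, more carefully, dividing the total flattening rank $R + 6(n-y_1)$ by $\binom{2p}{p} = 6$ and optimizing the bookkeeping of the lost corner rows against the gained diagonal shift — should collapse to exactly $n + y_1 - \tfrac23 y_2$ after the arithmetic, the $+y_1$ coming from the semi-diagonal position of $X_1$ and the $-\tfrac23 y_2$ from the $\frac53$-type coefficient attached to $X_2$'s offset. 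I would double-check the constant by tracking which of $y_3,y_4,y_5$ enter the $6(n-y_1)$ term versus the commutator term, since the asymmetry between the "$2$ above / $3$ below" split is what produces the asymmetric final formula.
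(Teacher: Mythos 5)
Your proposal takes a genuinely different route from the paper, and unfortunately it has a gap that would prevent it from producing the stated bound.

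The paper does not reduce to the commutator matrix of Proposition \ref{blockred}. Instead it works directly with the full $10n\times 10n$ Koszul flattening matrix \eqref{big10} and exhibits independent columns in three separate batches: (i) the last $n-y_1$ columns of each of the six $X_1$-blocks, contributing $6(n-y_1)$; (ii) among the remaining first $y_1$ columns of those six blocks, restricting to the top four block-rows (the $4n\times 6y_1$ matrix of Figure \ref{fig:4x6}), $4y_1+2(y_1-y_2)$ more independent columns; (iii) a symmetric count in the first four vertical blocks using the complementary rows. Summing gives $\trank(T_A^{\ww 2})\geq 6n+2(3y_1-2y_2)$, which divided by $\binom{4}{2}=6$ yields the claimed $n+y_1-\frac23 y_2$. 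The prime hypothesis enters only once, to rule out a single explicit \emph{binomial cubic} relation that would be needed for a linear dependence among the selected columns in step (ii); it is not used as a general UFD/valuation argument on $4\times 4$ determinants.

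The concrete problem with the commutator route is that it cannot reproduce a bound independent of $y_3,y_4,y_5$. The six commutators $M_{st}=X_sX_1^{-1}X_t-X_tX_1^{-1}X_s$ sit on semi-diagonals whose (signed) offsets are $\pm y_s\pm y_t-y_1$; the most extreme of these, $-(y_1+y_4+y_5)$, can have magnitude close to $n$ since $y_4,y_5$ are only bounded by $n/2$. So the spread of the six semi-diagonals is $\Theta(n)$, not $O(y_5)$ as you assert, and the number of ``interior'' $4\times 4$ blocks $N^j$ with all four semi-diagonals present can be $O(1)$; indeed, when $2y_1+y_4+y_5\geq n-y_1$ the entire $M_{45}$ semi-diagonal falls off the $(n-y_1)\times(n-y_1)$ matrix and vanishes identically, killing full rank for many blocks. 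Consequently your count would give a bound depending on, and degraded by, $y_4,y_5$ --- yet the target $n+y_1-\frac23 y_2$ does not involve $y_3,y_4,y_5$ at all, which is a structural sign that the commutator reduction is the wrong tool here. You also leave the final arithmetic and the precise role of the primes as placeholders; neither the ``$+y_1$ from $X_1$'s offset'' nor the ``$-\frac23 y_2$ from a $\frac53$-type coefficient'' is derived from your setup.
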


\begin{proof}
Consider the second Koszul flattening matrix of $T$ given in \eqref{big10}. The bound of $\ur(T)$ will follow by estimating the rank of this matrix.
 
Consider,  in each of the last six blocks,
the last $n-y_1$ columns. We obtain $6n-6y_1$ independent columns  due to the matrix $X_1$ in  \eqref{big10}. 

We now restrict our attention to columns that are still in the last six blocks, but
just consider the  first $y_1$ columns in each block. We focus on the upper four horizontal blocks, obtaining a $4n\times 6y_1$ matrix, as in Figure \ref{fig:4x6}. We claim that the rank of this matrix is at least $4y_1+2(y_1-y_2)$.

In this matrix the second, third and sixth vertical blocks give independent columns, due to the matrices that are $y_3$ below the diagonal (blue in Figure \ref{fig:4x6}). We claim that the $y_1$ columns in the first vertical block are also independent of  them. For contradiction suppose there is a linear relation involving $i$-th column in the first block. The $(i+y_4)$-th entry of the first block of this column may only be canceled using the $(i+y_4-y_3)$-th column in the second block. Then the contribution in the third block may only be canceled by the $(i+y_5-y_3)$-th column in the third vertical block. Now the entries in the second horizontal block cancel if and only if a binomial cubic equation on the entries is satisfied. This is not possible, as the entries are distinct primes. 

So far we obtain the lower bound on the rank of matrix in Figure \ref{fig:4x6} equal to $4y_1$. By adding the last $(y_1-y_2)$ columns of fourth and fifth block we may find $2(y_1-y_2)$ more independent columns.

We now focus on the first four vertical blocks in \eqref{big10}. Among those,
the first four horizontal blocks are zero, thus we restrict to the  last six. We further restrict to the rows that are zero in the last six vertical blocks. We obtain a $6\times 4$ block matrix. We lower bound   the rank of this matrix in 
the same way we attained the bound for the matrix in Figure \ref{fig:4x6}.

In total we obtain that the rank of \eqref{big10} is at least
$$6(n-y_1)+2(4y_1+2(y_1-y_2))=6n+2(3y_1-2y_2).$$
This exactly translates to the bound on the border rank as in the statement of the lemma.
\end{proof}

\begin{figure}[!htb]\begin{center}
\includegraphics[scale=0.5]{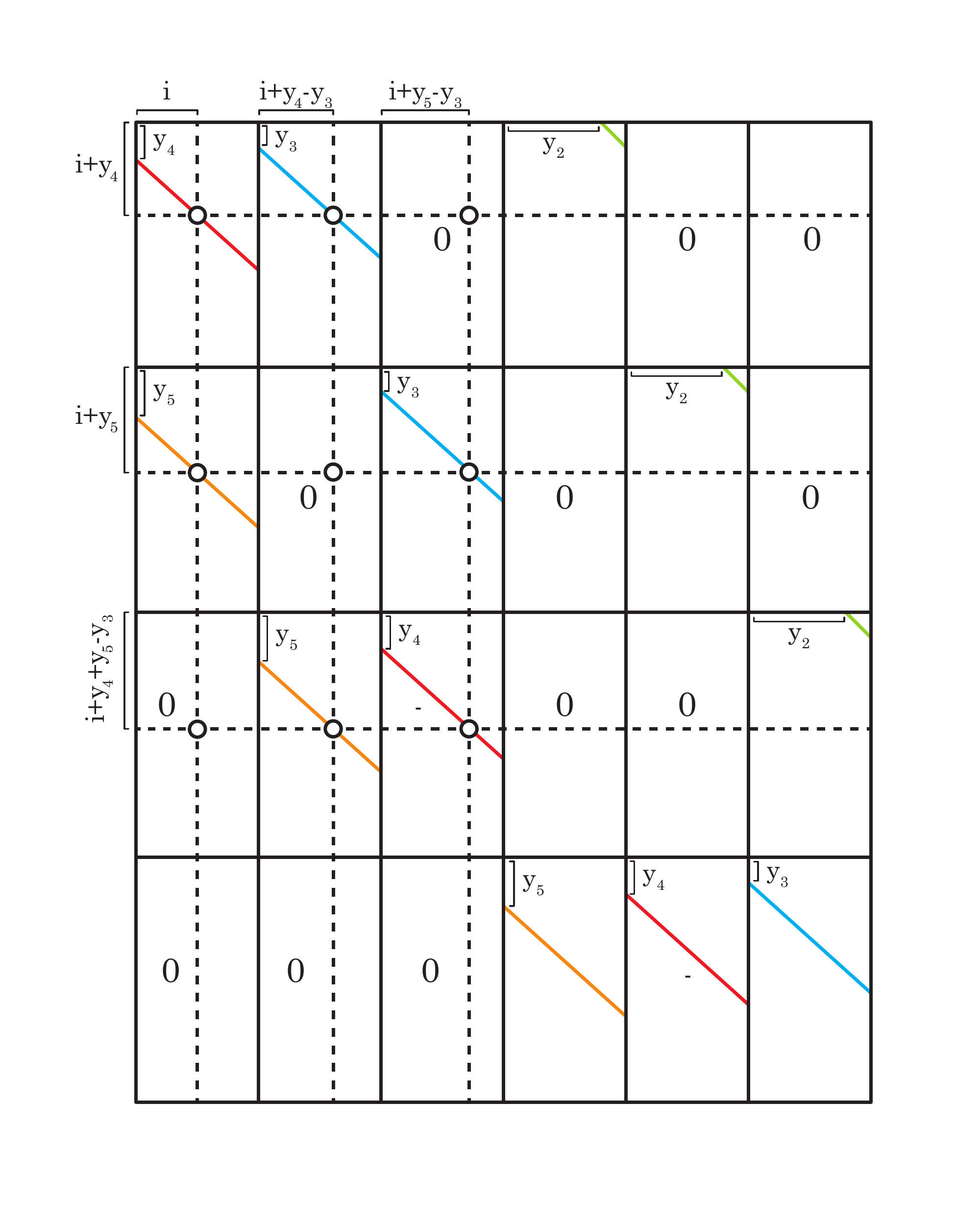}
\caption{The size $4n\times 6y_1$ submatrix of the  $4n\times 6n$ matrix.}
\label{fig:4x6}
\end{center}
\end{figure}

\newpage

\section{Code for the proof of Theorem \ref{thm:explicit}}\label{codesect}
\begin{verbatim}
S=QQ[w_0..w_4];
rs=res ideal basis(1,S);
kosz=rs.dd_3;
R=QQ[a_(-6)..a_6,b_(-6)..b_6,c_(-6)..c_6,w_0..w_4] 
T6:=0
for i from -6 to 6 do
(
 for j from max(-6,-i-6) to min(6,-i+6) do
  ( 
   T6=T6+(2^i+3^j-7*i*j)*a_i*b_j*c_(-i-j)
  ); 
)
by=matrix{{b_(-6)..b_6}}; cz=matrix{{c_(-6)..c_6}};
L=subsets(-6..6,4); zle={};
for m in L do (
subp={};
 for mm from -6 to 6 do(
  if (not member(mm,m)) then(
   subp=subp|{a_mm=>(sum for i from 0 to 4 list (random(0,R)*w_i))};
  );
 );
 l= rank diff(cz,diff(transpose by,diff(sub(kosz,R),sub(T6,subp))));
if (l<127) then zle=zle|{m};
)
#zle
\end{verbatim}

\bibliographystyle{amsplain}
 
\bibliography{Lmatrix}

 \end{document}